\theoremstyle{plain} % default
\newtheorem{thm}{Theorem}
\newtheorem{lem}{Lemma}
\newtheorem{cor}{Corollary}
\theoremstyle{definition}
\newtheorem{defn}{Definition}
\newcommand{\N}{\mathbbmss{N}}
\newcommand{\Z}{\mathbbmss{Z}}
\renewcommand{\L}{\mathcal{L}}
\newcommand{\OO}{\mathcal{O}}
\newcommand{\powset}[1]{2^{#1}}
\newcommand{\len}[1]{{\vert #1 \vert}}
\newcommand{\substr}[3]{#1[#2,#3]}
\newcommand{\chr}[2]{#1[#2]}
\newcommand{\mboxspace}[1]{\mbox{\hspace{1em} #1 \hspace{1em}}}
\newcommand{\matches}{\vartriangleleft}
\newcommand{\suffix}[2]{#1[#2..]}
\newcommand{\prefix}[2]{#1[..#2]}
\DeclareMathOperator*{\closure}{close_\sqsubseteq}
\title{Construction of minimal DFAs from biological motifs}
\author{Tobias Marschall\\
\small Bioinformatics for High-Throughput Technologies,\\[-0.8ex]
\small Computer Science XI, TU Dortmund, Germany\\[-0.8ex]
\small \texttt{tobias.marschall@tu-dortmund.de}
}
\date{}
\begin{document}
\maketitle

\begin{abstract}
\noindent
Deterministic finite automata (DFAs) are constructed for various purposes in computational biology. Little attention, however, has been given to the efficient construction of minimal DFAs. In this article, we define \emph{simple} non-deterministic finite automata (NFAs) and prove that the standard subset construction transforms NFAs of this type into \emph{minimal} DFAs. Furthermore, we show how simple NFAs can be constructed from two types of patterns popular in bioinformatics, namely (sets of) generalized strings and (generalized) strings with a Hamming neighborhood.
\end{abstract}

%% \linenumbers

\section{Introduction}
\label{sec:introduction}
Deterministic and non-deterministic finite automata belong to the curriculum of every theoretical computer scientist. It is well known that, given a non-deterministic finite automaton (NFA), we can construct a deterministic finite automaton (DFA) recognizing the same language by employing the classical subset construction; each state in the resulting DFA corresponds to a set of NFA states. The details can be found in many textbooks on the topic, for example in~\cite{Hopcroft1979,Kozen1999Automata,Sipser2005Introduction}. If $Q$ is an NFA's finite state space, then there are $2^{|Q|}$ subsets and hence the same number of DFA states. In most cases, many of these states turn out to be inaccessible from the start state and can be discarded. In practice, we can use a construction scheme that only generates the accessible states by performing a breadth-first search on the state space~\cite{Navarro2002Flexible}. For each DFA, there exists a unique (up to isomorphism) minimal DFA that accepts the same language~\cite{Kozen1999Automata}. Following the subset construction, we may thus want to minimize the resulting DFA, for example by using Hopcroft's algorithm~\cite{Hop71,Knuutila2001Redescribing}.

In computational biology, the processing of sequences plays a prominent role. Sequences of nucleotides (DNA or RNA) and amino acids (proteins) are key players in the biology of cells. Recurring elements in such sequences, called \emph{patterns} or \emph{motifs}, can often be associated with biological function~\cite{HulBaiBui06,AlbWynEng04}. Three important problem fields in connection with motifs are those of \emph{motif search}~\cite{Navarro2002Flexible}, \emph{motif statistics}~\cite{ReiSchWat00,Reg00,NicSalFla02,LlaBetKni08,DBLP:conf/cpm/MarschallR08} and \emph{motif discovery}~\cite{TomLiBai05,LiTom06,SanDra06,MarRah09}. Not surprisingly, in many algorithms in these fields, motifs are transformed into deterministic automata recognizing all possible instances of the motif. Motivated by this observation, we explore the construction of minimal DFAs for two common motif classes, namely (sets of) generalized strings and consensus strings with a Hamming neighborhood. Ultimately, the goal is to find algorithms whose runtime depends linearly on the number of states of the minimal DFA (which would be optimal). Although automata theory has been subject to extensive research for decades, not much attention has been given to this particular topic. Recently in 2008, van~Glabbeek and Ploeger~\cite{GlaPlo08} addressed the problem of determinization and integrated minimization. In Section~\ref{sec:alt_proof}, we discuss the connections between their work and this article.

\paragraph{Our contributions} We identify a class of NFAs that directly result in minimal DFAs when subjected to the classical subset construction. Although the concept is quite simple and seemingly restrictive, we show that it is strong enough to cover many patterns found in computational biology. To this end, we give construction schemes to transform (sets of) generalized strings and consensus strings with a Hamming neighborhood into NFAs which exhibit this property.

The article is organized as follows. First, we establish notation by briefly re-stating textbook definitions of automata in Section~\ref{sec:notation}. Then, in Section~\ref{sec:simple_nfa}, we introduce the concept of \emph{simple NFA} and show that applying the subset construction to a simple NFA directly yields a minimal DFA. The theory is put to work in Sections~\ref{sec:app_to_gen_strings} and~\ref{sec:app_to_consensus_strings}, where we discuss the construction of minimal DFAs from generalized strings and consensus strings, respectively.

\section{Notation and Basic Definitions}\label{sec:notation}
Let $\Sigma$ be a finite alphabet and let $\Sigma^k$ be the set of all \emph{strings} of length~$k$. Then, the set of all finite strings $\bigcup_{i=0}^\infty\Sigma^i$ is denoted $\Sigma^\ast$ and $\bigcup_{i=1}^\infty\Sigma^i$ is denoted $\Sigma^+$. For a string $s\in\Sigma^\ast$, its length is written $\len{s}$, and $s_1 s_2$ denotes the concatenation of $s_1$ and $s_2$. The only string $\varepsilon\in\Sigma^\ast$ such that $\len{\varepsilon}=0$ is called \emph{empty string}. By $\chr{s}{i}$, we refer to the $i$-th character of $s$, i.e.\ $s=\chr{s}{1}\chr{s}{2}\ldots\chr{s}{\len{s}}$. Furthermore, $\substr{s}{i}{j}:=\chr{s}{i}\chr{s}{i+1}\ldots\chr{s}{j}$ refers to a substring of $s$.
If $i>j$, we define $\substr{s}{i}{j}:=\varepsilon$. Prefixes and suffixes of $s$ are written $\prefix{s}{i}:=\substr{s}{1}{i}$ and $\suffix{s}{i}:=\substr{s}{i}{\len{s}}$, respectively.

We can extend the notion of a string in a natural way by allowing a \emph{generalized string} to be a sequence of sets of characters:
\begin{defn}[Generalized string]
Given an alphabet $\Sigma$, we call the set $\mathcal{G}_\Sigma:=\powset{\Sigma}\setminus\lbrace\emptyset\rbrace$ \emph{generalized alphabet over $\Sigma$} and a string over $\mathcal{G}_\Sigma$ \emph{generalized string}. By $\mathcal{G}_\Sigma^k$ and $\mathcal{G}_\Sigma^*$, we refer to the set of all generalized strings of length~$k$ and the set of all generalized strings of finite length, respectively. We say a string $s\in\Sigma^\ast$ \emph{matches} the generalized string $g\in\mathcal{G}_\Sigma^\ast$, written $s\matches g$, if $\len{s}=\len{g}$ and $\chr{s}{i}\in \chr{g}{i}$ for $1\leq i\leq\len{g}$.
\end{defn}

We write $\mathcal{G}$ instead of $\mathcal{G}_\Sigma$ if the used alphabet is clear from the context. Note that every string $s\in\Sigma$ can be translated into the generalized string $\lbrace \chr{s}{1}\rbrace\lbrace \chr{s}{2}\rbrace\ldots\lbrace \chr{s}{\len{s}}\rbrace$. In this sense, strings can be seen as special cases of generalized strings. Let us now proceed to the classical definitions of automata.

\begin{defn}[Deterministic finite automaton (DFA)]\label{def:dfa}
A \emph{deterministic finite automaton} is a tuple $(Q, \Sigma, \delta, q_\alpha, F)$, where
$Q$ is a finite set of states, $\Sigma$ is a finite alphabet, $\delta: Q\times\Sigma\rightarrow Q$ is a \emph{transition function}, $q_\alpha\in Q$ is the \emph{start state}, and $F\subset Q$ is the set of \emph{accepting states}.
\end{defn}

\begin{defn}[Non-deterministic finite automaton (NFA)]\label{def:nfa}
A \emph{non-deterministic finite automaton} is a tuple $(Q, \Sigma, \Delta, Q_\alpha, F)$, where $Q$, $\Sigma$ and $F$ are defined as for the DFA above, $\Delta: Q\times\Sigma\rightarrow \powset{Q}$ is the \emph{non-deterministic transition function} and $Q_\alpha\subset Q$ is a \emph{set of start states}.
\end{defn}

Note that using a set $Q_\alpha$ instead of only one start state is a notational convenience rather than a conceptual change: we can always transform the automaton to have only one start state by adding the start state $q_\alpha$ and defining its outgoing transitions by $(q_\alpha,\sigma)\mapsto\bigcup_{q\in Q_\alpha}\Delta(q,\sigma)$.

Another convenience is the extension of a DFA's transition function to strings (instead of single characters):
\begin{align*}
\hat{\delta}: Q\times\Sigma^\ast &\rightarrow Q \\
(q,s) &\mapsto
\begin{cases}
q & \mbox{if }s=\varepsilon\mbox{\,,}\\
\hat{\delta}\big(\delta(q,\chr{s}{1}),\suffix{s}{2}\big) & \mbox{otherwise\,.}
\end{cases}
\end{align*}
Analogously, the transition function $\Delta$ of an NFA can be extended to $\hat{\Delta}$. Furthermore, we define $\mathcal{L}(q):=\lbrace s\in\Sigma^\ast\,\vert\,\hat{\Delta}(q,s)\cap F\neq\emptyset\rbrace$ and call it \emph{language of state~$q$}. The language of a set of states~$Q'$ is defined as $\mathcal{L}(Q'):=\bigcup_{q'\in Q'}\mathcal{L}(q')$. Following~\cite{Berstel1985}, we call a state $q\in Q$ \emph{accessible}, if there exist a string $s\in\Sigma^*$ and a start state $q_\alpha\in Q_\alpha$ such that $\hat{\Delta}(q_\alpha,s)=q$. A state $q\in Q$ is called \emph{coaccessible} if there exist a string $s\in\Sigma^*$ and an accepting state $q_f\in F$ such that $\hat{\Delta}(q,s) = q_f$. Equivalently, $q\in Q$ is coaccessible if $\mathcal{L}(q)\cap F\neq\emptyset$. If all states of an automaton are accessible and coaccessible, it is called \emph{trim}.

Let us briefly review the classical textbook construction of a DFA recognizing the same language as a given NFA.
\begin{lem}[Subset Construction; Rabin and Scott,~\cite{RabSco59}]
Let $M=(Q, \Sigma, \Delta, Q_\alpha, F)$ be an NFA. Then $(\powset{Q}, \Sigma, \delta, Q_\alpha, \lbrace Q'\in\powset{Q}\vert Q'\cap F\neq\emptyset\rbrace)$, with $\delta:(Q',\sigma)\mapsto\bigcup_{q'\in Q'}\Delta(q',\sigma)$, is a DFA that recognizes the same language as $M$.
\end{lem}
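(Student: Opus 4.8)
The plan is to show directly that the constructed automaton $D$ is a well-formed DFA recognizing exactly $\mathcal{L}(M)$. First I would dispatch well-formedness in one line: for every $Q'\in\powset{Q}$ and $\sigma\in\Sigma$ the set $\delta(Q',\sigma)=\bigcup_{q'\in Q'}\Delta(q',\sigma)$ is again a subset of $Q$, so $\delta$ is a total function into $\powset{Q}$; likewise the designated start state $Q_\alpha$ is an element of $\powset{Q}$ and the accepting family $\lbrace Q'\mid Q'\cap F\neq\emptyset\rbrace$ is a subset of $\powset{Q}$. Hence the tuple genuinely satisfies Definition~\ref{def:dfa}.

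The heart of the argument is the bridging identity
\begin{equation}
\hat{\delta}(Q',s)=\bigcup_{q'\in Q'}\hat{\Delta}(q',s),
\end{equation}
valid for every $Q'\subseteq Q$ and every $s\in\Sigma^\ast$, which I would prove by induction on $\len{s}$. For $s=\varepsilon$ both sides collapse to $Q'$, since $\hat{\delta}(Q',\varepsilon)=Q'$ while each $\hat{\Delta}(q',\varepsilon)=\lbrace q'\rbrace$. For the inductive step I write $s=\chr{s}{1}\suffix{s}{2}$ and unfold the outer recursion as $\hat{\delta}(Q',s)=\hat{\delta}\big(\delta(Q',\chr{s}{1}),\suffix{s}{2}\big)$. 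Substituting the definition of $\delta$ and applying the induction hypothesis to the set $\delta(Q',\chr{s}{1})$ and the shorter string $\suffix{s}{2}$ rewrites the right-hand side as a double union ranging over $q'\in Q'$ and over $\Delta(q',\chr{s}{1})$; recognizing the inner union as $\hat{\Delta}(q',s)$ via the recursive definition of $\hat{\Delta}$ closes the step.

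With the identity in hand, the language equality is immediate. The DFA $D$ accepts $s$ exactly when its run ends in an accepting state, i.e.\ when $\hat{\delta}(Q_\alpha,s)\cap F\neq\emptyset$. By the identity this is equivalent to $\big(\bigcup_{q\in Q_\alpha}\hat{\Delta}(q,s)\big)\cap F\neq\emptyset$, hence to the existence of some start state $q\in Q_\alpha$ with $\hat{\Delta}(q,s)\cap F\neq\emptyset$. This last condition says precisely that $s\in\mathcal{L}(q)$ for some $q\in Q_\alpha$, that is $s\in\mathcal{L}(Q_\alpha)$, the language recognized by $M$. Therefore $\mathcal{L}(D)=\mathcal{L}(M)$.

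I expect the only real obstacle to be the inductive step of the bridging identity: one must check that the two recursions peel the first character in a compatible way and that the set-valued unions commute, so that the double union genuinely reassembles into $\bigcup_{q'\in Q'}\hat{\Delta}(q',s)$. Everything else amounts to bookkeeping over the definitions of $\hat{\delta}$, $\hat{\Delta}$, and the accepting family.
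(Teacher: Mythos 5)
Your proposal is correct and complete: the bridging identity $\hat{\delta}(Q',s)=\bigcup_{q'\in Q'}\hat{\Delta}(q',s)$, proved by induction on $\len{s}$ with the set $Q'$ quantified universally in the hypothesis, is exactly the right key step, and the language equality follows as you say. Note that the paper itself omits this proof entirely (deferring to Rabin--Scott and Kozen); your argument is the standard textbook one found in those references, so there is no divergence to report --- you have simply supplied what the paper leaves out.
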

\begin{proof}
Omitted. See~\cite{RabSco59} or~\cite{Kozen1999Automata}.
\end{proof}

As mentioned above, some DFA states may be inaccessible. These states can be removed from the DFA's state space. To ease notation, we write $\proc{SubsetConstruction}(M)$ to denote the DFA resulting from the subset construction and subsequent removal of inaccessible states.

\section{Simple NFAs}\label{sec:simple_nfa}
Recall that our goal is to identify a class of NFAs for which the subset construction yields a minimal DFA; where a DFA is called minimal if there does not exist a DFA with fewer states that recognizes the same language. To this end, we define \emph{simple NFAs}.

\begin{defn}[Simple non-deterministic finite automaton]\label{def:simple_nfa}
Let an NFA~$M=(Q, \Sigma, \Delta, q_\alpha, F)$ be given. $M$ is called \emph{simple} if all states are accessible and the languages~$\mathcal{L}(q)$ of all states $q\in Q$ are non-empty and pairwise disjoint.
\end{defn}

Therefore, an automaton is simple if and only if it is trim and the languages of all states are pairwise disjoint. Note that an automaton can easily be made trim: If there is a state $q$ that is not coaccessible, that is, $\mathcal{L}(q)$ is empty, we can safely remove $q$ from $Q$ without changing the recognized language. Likewise, all inaccessible states can be removed without changing the recognized language.

\begin{thm}[Minimality of DFA constructed from simple NFA]\label{thm:simple_nfa}
Let $M_n=(Q, \Sigma, \Delta, Q_\alpha, F)$ be a simple NFA. Then, the DFA
\[M_d=\big(\mathcal{Q}\subset\powset{Q}, \Sigma, \delta, Q_\alpha, \mathcal{F}\big)=\proc{SubsetConstruction}(M_n)\]
is minimal.
\end{thm}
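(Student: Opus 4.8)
The plan is to show that the DFA $M_d$ produced by the subset construction is minimal by proving that no two of its (accessible) states are equivalent, i.e.\ that distinct states recognize distinct languages. By the Myhill--Nerode theorem (or the standard characterization of the minimal DFA), a DFA all of whose states are accessible is minimal if and only if no two distinct states are language-equivalent, where the language of a DFA state $Q'\in\mathcal{Q}$ is $\mathcal{L}(Q')=\{s\in\Sigma^\ast\mid\hat{\delta}(Q',s)\in\mathcal{F}\}$. Since $\proc{SubsetConstruction}$ already discards inaccessible states, accessibility is guaranteed, so the whole burden is to establish the pairwise inequivalence.

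The key observation is that the subset construction is compatible with the language semantics of the original NFA. First I would verify the invariant that for every reachable DFA state $Q'\in\mathcal{Q}$, the DFA-language of $Q'$ coincides with the NFA-language of the underlying state set, namely $\mathcal{L}(Q')=\bigcup_{q\in Q'}\mathcal{L}(q)$; this follows by a routine induction on the length of the input string, using the definition of $\delta$ and the acceptance condition $Q'\in\mathcal{F}\iff Q'\cap F\neq\emptyset$. Given this invariant, consider two distinct accessible DFA states $Q_1'\neq Q_2'$. Without loss of generality there is a state $q\in Q_1'\setminus Q_2'$. Because $M_n$ is simple, $\mathcal{L}(q)$ is non-empty, so pick some $s\in\mathcal{L}(q)$; then $s\in\mathcal{L}(Q_1')$. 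The disjointness condition then does the real work: since the languages of the NFA states are \emph{pairwise disjoint} and $q\notin Q_2'$, the string $s$ cannot lie in $\mathcal{L}(q')$ for any $q'\in Q_2'$, hence $s\notin\mathcal{L}(Q_2')$. Thus $\mathcal{L}(Q_1')\neq\mathcal{L}(Q_2')$, and the two states are inequivalent.

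The main obstacle, such as it is, lies in making the disjointness argument airtight. The point that must be used carefully is that the empty set never appears as a reachable DFA state: if $Q_1'$ or $Q_2'$ were empty, the separating-word argument above would break down. I would argue that $\emptyset\notin\mathcal{Q}$ because the start set $Q_\alpha$ is non-empty and $\delta$ maps a non-empty set to a non-empty set only if there are outgoing transitions---which is precisely where I expect to need a small additional observation. Indeed, simplicity (trimness) guarantees every state is coaccessible, so $\mathcal{L}(q)\neq\emptyset$ for all $q$, but this alone does not forbid $\delta(Q',\sigma)=\emptyset$ for some $\sigma$; in that case the empty set would be an accessible non-accepting sink. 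I would handle this by noting that such an empty state, if present, is the unique state with empty language and is thus separated from every genuine (non-empty) state exactly because those have non-empty languages---so the inequivalence argument still goes through, with the empty state's language being $\emptyset$ and every other reachable state's language being non-empty.

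Once every pair of distinct accessible states is shown to have distinct languages, minimality follows immediately from the Myhill--Nerode characterization, completing the proof. The essential content is therefore the translation of the two defining properties of a simple NFA---non-emptiness (coaccessibility) and pairwise disjointness of state languages---into the two hypotheses of the inequivalence test, namely that each separating state contributes a witness word and that this witness cannot be absorbed by the other state set.
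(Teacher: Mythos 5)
Your proof is correct and follows essentially the same route as the paper: both invoke the characterization of minimality via pairwise non-equivalence of states, pick a witness state $q$ in the symmetric difference $Q_1'\setminus Q_2'$, use non-emptiness of $\mathcal{L}(q)$ to obtain a separating word, and use pairwise disjointness to show that word cannot lie in $\mathcal{L}(Q_2')$. Your additional care about the subset-construction language invariant and the possible empty-set state is sound (and the latter is in fact already subsumed by the main disjointness argument, since the empty state's language is $\emptyset$ while every other state's language is non-empty), but it does not change the substance of the argument.
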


Before we are able to prove this, we need an auxiliary lemma and the notion of \emph{equivalent states} in a DFA. We define two states $p$ and $q$ of a DFA $(Q', \Sigma', \delta', q_\alpha', F')$ to be \emph{equivalent} if
$\hat{\delta}'(p,s)\in F'\Longleftrightarrow\hat{\delta}'(q,s)\in F'$ for all~$s\in\Sigma^\ast$.
\begin{lem}\label{lem:minimality}
A DFA is minimal if and only if its states are pairwise non-equivalent.
\end{lem}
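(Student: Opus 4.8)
The plan is to prove both directions of the biconditional in Lemma~\ref{lem:minimality}. The easier direction is the contrapositive of ``minimal implies pairwise non-equivalent'': if some DFA has two distinct equivalent states $p$ and $q$, I would show it is not minimal by constructing a smaller DFA recognizing the same language. The standard trick is to merge $p$ and $q$ into a single state. Concretely, I would define a quotient automaton in which $q$ is deleted and every transition that previously pointed to $q$ is redirected to $p$; since $p$ and $q$ have identical future behavior (by equivalence, $\hat\delta'(p,s)\in F' \Leftrightarrow \hat\delta'(q,s)\in F'$ for all $s$), redirecting these transitions cannot change which strings are accepted. This yields a DFA with strictly fewer states recognizing the same language, so the original was not minimal.

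For the converse, ``pairwise non-equivalent implies minimal,'' suppose for contradiction that a DFA $M=(Q',\Sigma',\delta',q_\alpha',F')$ has pairwise non-equivalent states yet is not minimal, so there is a DFA $M'=(Q'',\Sigma',\delta'',q_\alpha'',F'')$ with $\len{Q''}<\len{Q'}$ recognizing the same language $\mathcal{L}$. I would first assume without loss of generality that all states of $M$ are accessible (an inaccessible state could be removed to give a smaller equivalent DFA, contradicting non-minimality via the existence of $M'$; alternatively this is part of the standard setup). For each accessible state $q\in Q'$, fix a string $s_q$ with $\hat\delta'(q_\alpha',s_q)=q$. The map sending $q\mapsto\hat\delta''(q_\alpha'',s_q)$ then defines, for each state of $M$, a corresponding state of $M'$. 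Because $\len{Q''}<\len{Q'}$, the pigeonhole principle forces two distinct states $p,q\in Q'$ to map to the same state of $M'$, i.e.\ $\hat\delta''(q_\alpha'',s_p)=\hat\delta''(q_\alpha'',s_q)$.

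The crux of the argument is then to derive a contradiction from this collision. Since $M$ and $M'$ recognize the same language, for every $s\in\Sigma'^\ast$ we have $\hat\delta'(q_\alpha',s_p\,s)\in F'$ iff $s_p s\in\mathcal{L}$ iff $\hat\delta''(q_\alpha'',s_p s)\in F''$, and similarly for $s_q$. Using the identity $\hat\delta'(q_\alpha',s_p s)=\hat\delta'(p,s)$ and the collision $\hat\delta''(q_\alpha'',s_p s)=\hat\delta''(q_\alpha'',s_q s)=\hat\delta''(q_\alpha'',s_q)\cdot$(same continuation), I would chain these equivalences to conclude $\hat\delta'(p,s)\in F' \Leftrightarrow \hat\delta'(q,s)\in F'$ for all $s$. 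That is exactly the statement that $p$ and $q$ are equivalent, contradicting the assumption that the states of $M$ are pairwise non-equivalent. The main obstacle I anticipate is bookkeeping the chain of iff's cleanly and handling the accessibility reduction carefully, since the pigeonhole step only controls accessible states; everything else is routine once the simulation map $q\mapsto\hat\delta''(q_\alpha'',s_q)$ is in place.
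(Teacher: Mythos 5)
The paper never actually proves this lemma---its ``proof'' is a pointer to Chapters~13 and~15 of~\cite{Kozen1999Automata}---so you are supplying a real argument where the paper has none; your outline (merging a pair of equivalent states for one direction, and for the other a pigeonhole collision under the map $q\mapsto\hat{\delta}''(q_\alpha'',s_q)$ followed by a chain of iff's) is the standard textbook proof, and the iff chain itself is sound. Two smaller points on the first direction: your merge step must also handle the case where the deleted state $q$ is the start state (redirect the start to $p$), and ``redirecting these transitions cannot change which strings are accepted'' is not immediate as stated, because after a redirection the run continues from $p$ inside the \emph{modified} automaton, where further redirections may occur; the clean fix is an induction on the input showing that the state reached by the quotient automaton is always equivalent, in the original automaton, to the state reached by the original run (using that equivalence of states is preserved by transitions).

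The genuine gap is your ``without loss of generality all states of $M$ are accessible'' in the second direction. Under the paper's definition of minimality (no DFA with fewer states recognizes the same language), the lemma as literally stated is \emph{false} without an accessibility hypothesis, so this step cannot be repaired: over $\Sigma=\lbrace\mathtt{a}\rbrace$, take states $q_0$ (start, accepting, self-loop) and $q_1$ (non-accepting, self-loop, unreachable). Then $q_0$ and $q_1$ are non-equivalent, since $\varepsilon$ distinguishes them, yet the two-state automaton is not minimal: the one-state DFA accepts the same language $\Sigma^\ast$. Your parenthetical justification is also circular---``not minimal'' is your standing assumption in this direction, so exhibiting a smaller equivalent DFA by deleting an unreachable state contradicts nothing---and after deleting unreachable states the inequality $\len{Q''}<\len{Q'}$ need not survive for the trimmed automaton, which is exactly where the pigeonhole breaks (in the counterexample: one accessible state versus a one-state $M'$). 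The correct reading, which is what Kozen proves and the only way the paper uses the lemma, adds the hypothesis that all states are accessible; this is harmless in context because Lemma~\ref{lem:minimality} is only applied to outputs of $\proc{SubsetConstruction}$, which discards inaccessible states. With that hypothesis made explicit, your pigeonhole-plus-iff-chain argument for the second direction is correct as sketched.
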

\begin{proof}
See Chapters~13 and~15 in~\cite{Kozen1999Automata}.
\end{proof}

\begin{proof}[Proof of Theorem~\ref{thm:simple_nfa}]
Let $Q', Q''\in\mathcal{Q}$ be two distinct DFA states. By Lemma~\ref{lem:minimality}, we have to show that $Q'$ and $Q''$ are not equivalent, or more formally
\begin{equation}\label{eqn:states_equivalent}
\mathcal{L}(Q')=\bigcup_{q'\in Q'}\mathcal{L}(q')\neq\bigcup_{q''\in Q''}\mathcal{L}(q'')=\mathcal{L}(Q'')\mbox{\,.}
\end{equation}
Without loss of generality, assume that $Q'\setminus Q''\neq\emptyset$ and let $q\in Q'\setminus Q''$. By Definition~\ref{def:simple_nfa}, $\mathcal{L}(q)\cap \mathcal{L}(q'')=\emptyset$ for all $q''\in Q''$ and thus $\mathcal{L}(q)\cap\mathcal{L}(Q'')=\emptyset$. But, by choice of $q$, $\mathcal{L}(q)\subset\mathcal{L}(Q')$ and, by Definition~\ref{def:simple_nfa}, $\mathcal{L}(q)\neq\emptyset$. Hence, it follows that $\mathcal{L}(Q')\neq\mathcal{L}(Q'')$.
\end{proof}

\subsection{An Alternative Proof}\label{sec:alt_proof}
We give an alternative proof of Theorem~\ref{thm:simple_nfa} by means of the theory developed in~\cite{GlaPlo08}. There, van~Glabbeek and Ploeger consider five different variants of the classical subset construction. Each variant is characterized by an operation $f:\powset{Q}\rightarrow\powset{Q}$, where $Q$ is the state space of an NFA. When a new DFA state is produced in the course of the subset construction, it is subjected to the operation~$f$ before being added to the final automaton. In one variant, they define~$f$ to be the closure operation
\[\closure:Q'\mapsto\big\lbrace q\in Q\,\big\vert\,\mathcal{L}(q)\subseteq\mathcal{L}(Q')\big\rbrace\]
and show that the subset construction endowed with this operation directly produces minimal DFAs. Theorem~\ref{thm:simple_nfa} now follows from the definition of simple NFAs: As all sets $\mathcal{L}(q)$ for $q \in Q$ are pairwise disjoint, $\closure(Q')=Q'$ for each $Q'\subseteq Q$ and, thus, the classical subset construction yields a minimal DFA.

Note that the language inclusion problem required to be solved for the $\closure$-operation is in general hard to compute. According to~\cite{GlaPlo08}, it is PSPACE-complete.

\subsection{Self-Transitions of Start States}\label{sec:self-transition}
In most practical settings like pattern search or pattern statistics, we are given a certain type of pattern and need to construct an automaton that accepts all strings with a suffix matching this pattern, rather than an automaton that accepts only the strings that match the pattern. For instance, if our pattern is the single string \texttt{ABC} and we want to find all occurrences of \texttt{ABC} in a long text, we need to build an automaton recognizing all strings whose last three letters are \texttt{ABC}. For NFAs, we can easily obtain such an automaton once we have constructed an NFA accepting all strings that match our pattern. All we need to do is to modify the transition function~$\Delta$ by adding self-transitions to all start states
\begin{equation}\label{eqn:self-transition}
\Delta_\circlearrowleft:(q,\sigma)\mapsto
\begin{cases}
 \lbrace q\rbrace\cup \Delta(q,\sigma) & \mbox{if }q\in Q_\alpha\mbox{\,,} \\
 \Delta(q,\sigma) & \mbox{otherwise\,.}
\end{cases}
\end{equation}
Throughout this article, the subscript~``$\circlearrowleft$'' refers to this modification of a transition function. The next Lemma characterizes those simple NFAs that remain simple under this modification.

\begin{lem}\label{lem:self-transition}
Let $M=(Q, \Sigma, \Delta, Q_\alpha, F)$ be a simple NFA. The modified automaton $M_\circlearrowleft:=(Q, \Sigma, \Delta_\circlearrowleft, Q_\alpha, F)$ is simple if and only if, in~$M$, no start state can be reached from any other state. That means there do not exist $\sigma\in\Sigma$, $q_\alpha\in Q_\alpha$, and $q\in Q$ with $q_\alpha\neq q$ such that $q_\alpha\in\Delta(q,\sigma)$.
\end{lem}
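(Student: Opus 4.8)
The plan is to reduce the biconditional to a single statement about disjointness of state languages and then prove the two implications separately. Writing $\mathcal{L}_\circlearrowleft(q)$ for the language of state~$q$ in $M_\circlearrowleft$, I first note that $M_\circlearrowleft$ differs from $M$ only by \emph{added} transitions, so $\hat{\Delta}_\circlearrowleft(q,s)\supseteq\hat{\Delta}(q,s)$ for every state~$q$ and word~$s$. Consequently every state remains accessible and every language remains non-empty, and the only property of Definition~\ref{def:simple_nfa} that can possibly fail in $M_\circlearrowleft$ is pairwise disjointness of the $\mathcal{L}_\circlearrowleft(q)$. It therefore suffices to show that these languages are pairwise disjoint exactly when no start state receives a transition from a different state.

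For the direction stating that such a transition destroys simplicity, I argue by contraposition. Suppose $q_\gamma\in\Delta(q,\sigma)$ for some $\sigma\in\Sigma$, a start state $q_\gamma\in Q_\alpha$, and a state $q\neq q_\gamma$. Since $M$ is simple, $q_\gamma$ is coaccessible, so I may choose a word $w\in\mathcal{L}(q_\gamma)$. Reading $\sigma w$ from $q$ we may first move to $q_\gamma$ (as $q_\gamma\in\Delta(q,\sigma)\subseteq\Delta_\circlearrowleft(q,\sigma)$) and then follow an accepting run for $w$, so $\sigma w\in\mathcal{L}_\circlearrowleft(q)$; reading $\sigma w$ from $q_\gamma$ we may instead take the new self-transition on $\sigma$ and then the same accepting run, so $\sigma w\in\mathcal{L}_\circlearrowleft(q_\gamma)$. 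As $q\neq q_\gamma$, the two languages intersect and $M_\circlearrowleft$ is not simple.

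For the converse I assume that no start state is the target of a transition from a different state. A short first-entry argument upgrades this to the stronger fact that no start state is reachable from a different state along \emph{any} path (a path into $q_\alpha$ from $q\neq q_\alpha$ would, at its first entry into $q_\alpha$, use a forbidden direct transition). This gives a clean description of the modified languages: a non-start state can never reach a start state, so its behaviour is untouched by the self-transitions and $\mathcal{L}_\circlearrowleft(q)=\mathcal{L}(q)$; a start state $q_\alpha$ instead satisfies $s\in\mathcal{L}_\circlearrowleft(q_\alpha)$ iff $s$ has a suffix in $\mathcal{L}(q_\alpha)$, since the self-transitions let us consume an arbitrary prefix before branching off along original transitions, while the last visit of an accepting run to $q_\alpha$ splits off precisely such a suffix.

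It remains to prove disjointness, which I expect to be the crux. Suppose $s\in\mathcal{L}_\circlearrowleft(p)\cap\mathcal{L}_\circlearrowleft(q)$ for distinct $p,q$. By the description above, $s$ has a suffix $v_p\in\mathcal{L}(p)$ and a suffix $v_q\in\mathcal{L}(q)$, where the suffix equals $s$ itself whenever the state is a non-start state. Any two suffixes of $s$ are comparable, so I label the states $t,t'$ so that the shorter suffix $v\in\mathcal{L}(t)$ is a suffix of the longer one $v'=wv\in\mathcal{L}(t')$. Along an accepting run for $v'$ from $t'$, the state $r$ reached after reading $w$ satisfies $v\in\mathcal{L}(r)$; since also $v\in\mathcal{L}(t)$ and $M$ is simple, disjointness of the \emph{original} languages forces $r=t$. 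If $w=\varepsilon$ then $t=r=t'$, contradicting $p\neq q$. If $w\neq\varepsilon$ then $t$ is reachable from $t'\neq t$ along a non-empty path, and moreover $t$ must be a start state, because a non-start $t$ would force $v=s$ and hence $w=\varepsilon$; this contradicts the upgraded assumption. Thus no such $s$ exists, the languages are pairwise disjoint, and $M_\circlearrowleft$ is simple. The main obstacle is exactly this last step: reading off the suffix structure of $\mathcal{L}_\circlearrowleft$ and then using the disjointness already present in $M$ to pin down $r=t$, thereby converting a language overlap into a forbidden transition into a start state.
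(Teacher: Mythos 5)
Your proof is correct and follows essentially the same route as the paper's: the forward direction is the identical $\sigma w$-overlap argument, and your backward direction rests on exactly the facts the paper uses --- that $\mathcal{L}_\circlearrowleft(q)=\mathcal{L}(q)$ for states that cannot reach a start state, that membership of $s$ in $\mathcal{L}_\circlearrowleft$ of a start state yields a suffix of $s$ in the original language, and that disjointness of the original languages pins down the intermediate state $r$, converting the overlap into a forbidden transition into a start state. The only differences are organizational: you state the suffix characterization of $\mathcal{L}_\circlearrowleft$ explicitly and run one unified comparison of the two suffixes, where the paper splits into the cases $s\in\mathcal{L}(q')$ and $s\notin\mathcal{L}(q')$, and you spell out the first-entry upgrade (from ``no direct transition into a start state'' to ``no path into a start state'') that the paper leaves implicit.
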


\begin{proof}
In this proof, we use the notation $\mathcal{L}_\circlearrowleft(q)$ to refer to the language of the state~$q$ with respect to the modified NFA $(Q, \Sigma, \Delta_\circlearrowleft, Q_\alpha, F)$.

``$\Longrightarrow$'': Suppose $(Q, \Sigma, \Delta_\circlearrowleft, Q_\alpha, F)$ is simple and there exist $\sigma\in\Sigma$, $q_\alpha\in Q_\alpha$, and $q\in Q$ with $q_\alpha\neq q$ such that $q_\alpha\in\Delta(q,\sigma)$. Thus, $\sigma s\in\mathcal{L}(q)$ for all $s\in\mathcal{L}(q_\alpha)$. Because of the added self-transition, we also have $\sigma s\in\mathcal{L}_\circlearrowleft(q_\alpha)$ and, thus, $\mathcal{L}_\circlearrowleft(q_\alpha)$ and $\mathcal{L}_\circlearrowleft(q)$ are not disjoint, contradicting the assumption that $M_\circlearrowleft$ is simple.

``$\Longleftarrow$'': Now, we assume that there do not exist any~$\sigma\in\Sigma$, $q_\alpha\in Q_\alpha$, and $q\in Q$ with $q_\alpha\neq q$ such that $q_\alpha\in\Delta(q,\sigma)$. The properties that all states are accessible and coaccessible cannot get lost by adding the additional self-transitions. Therefore, we only need to verify that $\L_\circlearrowleft(q)$ and $\L_\circlearrowleft(q')$ are disjoint for all distinct $q,q'\in Q$. For the sake of contradiction, we assume there exist distinct $q,q'\in Q$ violating this condition. We choose $s\in\L_\circlearrowleft(q)\cap\L_\circlearrowleft(q')$ such that $s\in\L_\circlearrowleft(q)\setminus\L(q)$; if that is not possible, it becomes possible after swapping $q$ and $q'$, because $\L(p)\subseteq\L_\circlearrowleft(p)$ for all $p\in Q$ and $\L(q)\cap\L(q')=\emptyset$. We have to distinguish two cases:

\emph{Case 1} ($s\in\L(q')$): By our assumption, there does not exist a state in $Q\setminus Q_\alpha$ from which a start state can be reached. This means that the transition function remains unchanged for all states reachable from any state in $Q\setminus Q_\alpha$, which implies that $\L(p)=\L_\circlearrowleft(p)$ for all $p\in Q\setminus Q_\alpha$. Therefore, $q$ must be a start state. We chose $s$ to lie in $\L_\circlearrowleft(q)\setminus\L(q)$, which implies that there exists a $k\in\N$ such that $\suffix{s}{k}\in\L(q)$. % TODO: explain why?!?
Since all $\L(p)$ for $p\in Q$ are disjoint, it follows that $\suffix{s}{k}\notin\L(p)$ for all $p\in Q\setminus\lbrace q\rbrace$. As $s\in\L(q')$, we thus conclude that $\Delta(q',\prefix{s}{k-1})=q$, which contradicts the assumption that we cannot reach a start state from any other state than itself.

\emph{Case 2} ($s\notin\L(q')$): By the same argument as in the last case, we conclude that $q$ and $q'$ must be start states. Again, this implies the existence of $k,k'\in\N$ such that $\suffix{s}{k}\in\L(q)$ and $\suffix{s}{k'}\in\L(q')$. If $k=k'$, then $\suffix{s}{k}\in\L(q)\cap\L(q')\neq\emptyset$, contradicting the simpleness of $M$. We assume, without loss of generality, that $k<k'$. Since $\suffix{s}{k'}\in\L(q')$ and $\suffix{s}{k'}\notin\L(p)$ for all $p\in Q\setminus\lbrace q'\rbrace$, we conclude that $\Delta(q,\substr{s}{k}{k'-1})=q'$, again contradicting the assumption that we cannot reach a start state from any other state than itself.
\end{proof}

\section{Application to Generalized Strings}\label{sec:app_to_gen_strings}

In the next two sections, we show that generalized strings and sets of generalized strings admit the construction of simple NFAs. Obviously, a single string is a special case of a set of strings. To aid understandability, we nonetheless start with the easier case of one single string.

\subsection{Single Generalized Strings}\label{sec:single_gen_string}
For a generalized string $g$, an NFA recognizing all strings that match $g$ can easily be constructed by connecting the state set $Q=\lbrace 0,\ldots,|g|\rbrace$ with the transition function
\[\Delta:(q,\sigma)\mapsto
\begin{cases}
 \lbrace q+1\rbrace & \mbox{if }q<\len{g}\mbox{ and } \sigma\in \chr{g}{q+1}\mbox{\,,} \\
 \emptyset & \mbox{otherwise\,.}
\end{cases}
\]

\begin{figure}
\begin{center}
\includegraphics[width=\textwidth]{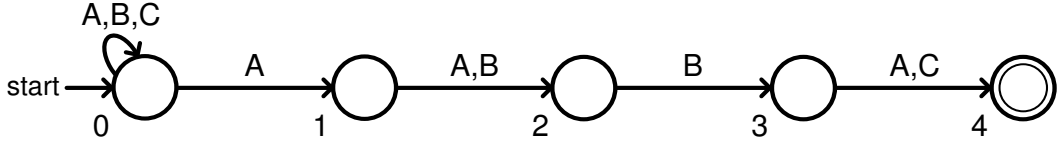}
\end{center}
\caption{Example of a simple NFA (with self-transition added to the start state) constructed from the generalized string \texttt{\{A\}\{A,B\}\{B\}\{A,C\}} over the alphabet $\Sigma=\texttt{\{A,B,C\}}$. The accepting state is represented by two concentric circles.
}
\label{fig:nfa_one_genstring}
\end{figure}

Setting $Q_\alpha=\lbrace 0\rbrace$ and $F=\lbrace \len{g} \rbrace$ completes the construction of our NFA $(Q, \Sigma, \Delta, Q_\alpha, F)$. For brevity, we write $\proc{NFA}(g)$ to denote the automaton created from a generalized string~$g$ using the above construction.

\begin{lem}\label{lem:single_gen_string_nfa_simple}
Let~$g$ be a generalized string. Then $M_g:=\proc{NFA}(g)$ is a simple NFA.
\end{lem}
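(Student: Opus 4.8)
The plan is to verify the three defining properties of a simple NFA from Definition~\ref{def:simple_nfa} directly from the linear structure of $M_g$, namely accessibility of all states, non-emptiness of every state language, and pairwise disjointness of these languages. The observation that drives everything is a clean characterization of the state languages: for each $q \in Q$ I would establish that
\[
\mathcal{L}(q) = \big\{ s \in \Sigma^\ast \,\big\vert\, s \matches \suffix{g}{q+1} \big\},
\]
i.e.\ $\mathcal{L}(q)$ consists exactly of the strings matching the suffix generalized string $\chr{g}{q+1}\,\chr{g}{q+2}\ldots\chr{g}{\len{g}}$ (with $\mathcal{L}(\len{g}) = \lbrace\varepsilon\rbrace$). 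This holds because $M_g$ is an acyclic chain: the only transition out of $q$ leads to $q+1$, and it exists precisely for the characters $\sigma \in \chr{g}{q+1}$. Hence the unique route from $q$ to the single accepting state $\len{g}$ passes through $q, q+1, \ldots, \len{g}$, consuming one matching character at each step; a short induction on $\len{g} - q$ makes this precise. The decisive consequence is that \emph{every} string in $\mathcal{L}(q)$ has length exactly $\len{g} - q$.

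With this characterization in hand, the three properties fall out quickly. For accessibility, state $q$ is reached from the start state $0$ by reading any string matching the prefix $\prefix{g}{q}$; such a string exists because every character set $\chr{g}{i}$ is non-empty by the definition of the generalized alphabet $\mathcal{G}_\Sigma = \powset{\Sigma}\setminus\lbrace\emptyset\rbrace$. The same non-emptiness gives $\mathcal{L}(q) \neq \emptyset$, since one can select an arbitrary character from each of $\chr{g}{q+1}, \ldots, \chr{g}{\len{g}}$ to assemble a matching string (and $\varepsilon \in \mathcal{L}(\len{g})$). Finally, for pairwise disjointness, take distinct $q, q' \in Q$ and assume without loss of generality $q < q'$. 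Every string in $\mathcal{L}(q)$ has length $\len{g} - q$, whereas every string in $\mathcal{L}(q')$ has length $\len{g} - q' < \len{g} - q$; since the lengths differ, the two languages cannot share a string.

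I expect no serious difficulty here: the only step needing care is the exact-length characterization of $\mathcal{L}(q)$, which rests on $M_g$ having no self-loops and a strictly increasing transition structure, so that reaching the accepting state from $q$ forces exactly $\len{g} - q$ steps. Once the lengths are pinned down, disjointness is automatic, while accessibility and non-emptiness both reduce to the single fact that no position of $g$ carries the empty character set.
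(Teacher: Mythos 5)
Your proof is correct and follows essentially the same route as the paper's: both arguments hinge on the observation that every string in $\mathcal{L}(q)$ has length exactly $\len{g}-q$, so languages of distinct states are disjoint for trivial length reasons. The only difference is that you spell out accessibility and non-emptiness via the non-emptiness of each character set $\chr{g}{i}$, which the paper dismisses as ``clear''; this is a matter of detail, not of approach.
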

\begin{proof}
Clearly, all states $i\in Q$ are accessible and coaccessible. $M_g$ admits only transitions from a state $i$ to its successor state $i+1$; only the last state in this chain is an accepting state. Thus, for each state $i\in Q$, the lengths of all accepted strings $s\in\mathcal{L}(i)$ equal $\len{g}-i$. Hence, for two different states~$i$ and~$j$, accepted strings have different lengths. Thus, all $\mathcal{L}(i)$ must be pairwise disjoint (for $i\in Q$).
\end{proof}

As discussed in Section~\ref{sec:self-transition}, we often need to add a self-transition to the start state. This modification is defined formally in Equation~\eqref{eqn:self-transition}. We write $\proc{NFA}_\circlearrowleft(g)$ to refer to the resulting automaton. See Figure~\ref{fig:nfa_one_genstring} for an example. Combining Theorem~\ref{thm:simple_nfa}, Lemma~\ref{lem:single_gen_string_nfa_simple}, and Lemma~\ref{lem:self-transition}, we arrive at the following corollary:

\begin{cor}
Let $g$ be a generalized string and $M_g:=\proc{NFA}_\circlearrowleft(g)$ the corresponding NFA. Then, $\proc{SubsetConstruction}(M_g)$
is a minimal DFA.
\end{cor}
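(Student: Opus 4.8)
The plan is to combine the three results cited in the statement, assembling them in the logical order dictated by Theorem~\ref{thm:simple_nfa}. The theorem guarantees minimality of the DFA produced by $\proc{SubsetConstruction}$ whenever its input is simple, so the entire task reduces to establishing that $M_g = \proc{NFA}_\circlearrowleft(g)$ is a simple NFA. Once that is in hand, minimality follows with no further work.

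First I would invoke Lemma~\ref{lem:single_gen_string_nfa_simple}, which already yields that the unmodified automaton $\proc{NFA}(g)$ is simple. The remaining work is to show that adding the self-transition at the start state preserves simpleness, and for this I would apply Lemma~\ref{lem:self-transition}. That lemma reduces the question to a single structural condition: in $\proc{NFA}(g)$, no start state may be reachable from any other state. So the proof is really just a verification that this hypothesis is met by the chain-shaped automaton of Section~\ref{sec:single_gen_string}.

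Verifying this condition is the only step that touches the explicit construction, and it is where I would concentrate the argument, though I expect it to be routine rather than a genuine obstacle. By construction the unique start state is $0$, and every nonempty value of the transition function has the form $\Delta(q,\sigma) = \{q+1\}$ with $q+1 \geq 1$. Hence $0$ never appears in the image of $\Delta$, so there is no pair $(q,\sigma)$ with $0 \in \Delta(q,\sigma)$, and the hypothesis of Lemma~\ref{lem:self-transition} holds vacuously. It follows that $\proc{NFA}_\circlearrowleft(g)$ is simple, and Theorem~\ref{thm:simple_nfa} then immediately gives that $\proc{SubsetConstruction}(M_g)$ is minimal.
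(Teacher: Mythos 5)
Your proof is correct and follows exactly the paper's route: the paper derives this corollary by combining Theorem~\ref{thm:simple_nfa}, Lemma~\ref{lem:single_gen_string_nfa_simple}, and Lemma~\ref{lem:self-transition}, just as you do. You are in fact slightly more careful than the paper, which leaves implicit the verification that state~$0$ never lies in the image of $\Delta$ (so the hypothesis of Lemma~\ref{lem:self-transition} holds); your explicit check of this is correct and welcome.
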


\subsection{Sets of Generalized Strings}\label{sec:gen_string_set}
In this section, we generalize the above results to finite sets of generalized strings of equal length. Speaking formally, we assume a length~$\ell$ and $G\subset\mathcal{G}^\ell$ to be given and seek to construct a simple NFA that recognizes all strings that have a suffix matching a $g\in G$. As above, we first construct an automaton that recognizes all strings matching a $g\in G$ and, in a second step, add self-transitions to the start states~$Q_\alpha$.

The automaton we build is organized level-wise with $\ell+1$ levels. Transitions are only possible between states in adjacent levels and only in one direction (which we choose to call \emph{downwards}). The bottom level contains just one state which is the single accepting state; all states in the top level are start states. As before for a single generalized string, two states $q'$ and $q''$ in different levels are obviously ``language-disjoint'', meaning that $\mathcal{L}(q')\cap\mathcal{L}(q'')=\emptyset$. But here, we possibly need more than one state in a level, which entails the problem of ensuring language-disjointness for states in the same level. We achieve this by using a state space induced by a special parent-child relation between states in adjacent levels. Before we formally construct state space and automaton, the impatient reader may have a look at the example in Figure~\ref{fig:nfa_three_genstrings}.

\begin{figure}[t]
\begin{center}
\includegraphics[width=\textwidth]{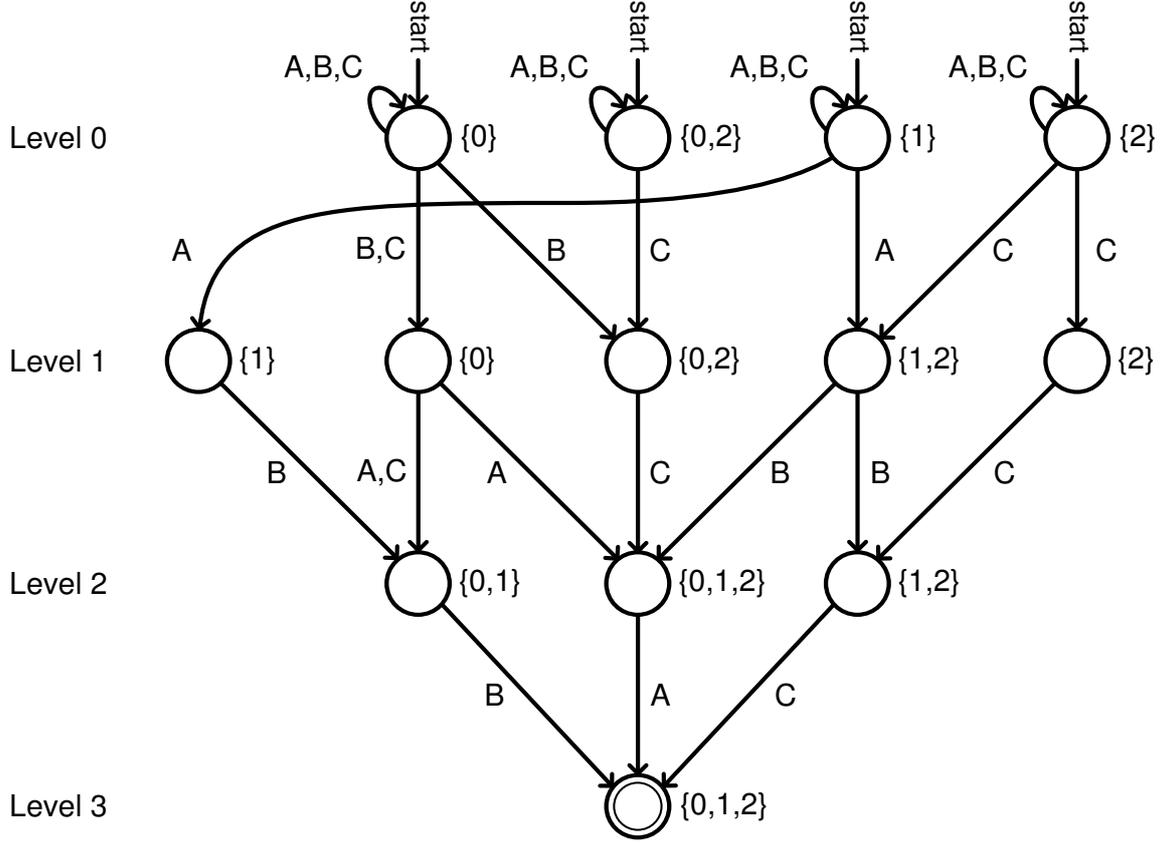}
\end{center}
\caption{Example of a simple NFA constructed from the three generalized strings 0:\texttt{\{B,C\}\{A,C\}\{A,B\}}, 1:\texttt{\{A\}\{B\}\{A,B,C\}}, and 2:\texttt{\{C\}\{B,C\}\{A,C\}} over the alphabet $\Sigma=\texttt{\{A,B,C\}}$. Each state is annotated with the set of generalized strings that are ``active'' in this state (each generalized string is represented by its index 0, 1, or 2). The accepting state is represented by two concentric circles.
}
\label{fig:nfa_three_genstrings}
\end{figure}

Let us begin with the formal specification of a suitable state space $Q$. We choose $Q$ to be a special subset of $\bar{Q}:=\powset{G}\times\lbrace 0,\ldots,\ell\rbrace$ with the following semantics in mind: to be in state $q=(H,k)$ means that the last~$k$ characters read match the first~$k$ positions of a $g\in H$. For the definition of~$Q$, we need the function $\proc{Parent}:\bar{Q}\times\Sigma\rightarrow\bar{Q}\cup\lbrace\bot\rbrace$ given by
\begin{equation}\label{eqn:gen_string_parent}
\proc{Parent}:\big((H,k),\sigma\big)\mapsto
\begin{cases}
\big(\,\lbrace h\in H\,\vert\,\sigma\in\chr{h}{k}\rbrace\mbox\,,\,k-1\,\big) & \mbox{if }k>0\mbox{\,,} \\
\bot & \mbox{otherwise\,.}
\end{cases}
\end{equation}
We say that $\proc{Parent}(q,\sigma)$ is a parent of~$q$ under the character~$\sigma$. The special symbol~$\bot$ is used to indicate that a state is in the top level and therefore does not have any parents. The \proc{Parent} mapping induces a hierarchy of $\ell+1$ levels of states:
\begin{align}
\label{eqn:gen_string_Qell}Q_\ell:=&\lbrace(G,\ell)\rbrace\mbox{\,,}\\
\label{eqn:gen_string_Qi}Q_i:=&\Big\lbrace(H,i), H\in\powset{G}\setminus\lbrace\emptyset\rbrace\,\Big\vert\,\exists q\in Q_{i+1},\sigma\in\Sigma:\proc{Parent}(q,\sigma)=(H,i)\Big\rbrace\mbox{\,,}
\end{align}
for $0\leq i<\ell$. Finally, we write our state space as
\begin{equation}\label{eqn:gen_strings_state_space}
Q:=Q_0\cup\ldots\cup Q_\ell\mbox{\,.}
\end{equation}
The \proc{Parent} mapping also induces a transition function~$\Delta$:
\begin{equation}\label{eqn:gen_string_delta}
\Delta:\big((H,k),\sigma\big)\mapsto
\begin{cases}
\big\lbrace q\in Q_{k+1}\,\big\vert\,\proc{Parent}(q,\sigma)=(H,k)\big\rbrace & \mbox{if }k<\ell\mbox{\,,}\\
\emptyset & \mbox{otherwise\,.}
\end{cases}
\end{equation}
To complete the construction, we set $Q_\alpha:=Q_0$ and $F:=Q_\ell=\lbrace(G,\ell)\rbrace$ and obtain $\proc{NFA}(G):=(Q, \Sigma, \Delta, Q_\alpha, F)$.
The next lemma states that an NFA constructed in this way accepts exactly the language given by~$G$.
\begin{lem}
Let a length $\ell\in\N$, a set of generalized strings $G\subset\mathcal{G}^\ell$, and $(Q, \Sigma, \Delta, Q_\alpha, F)=\proc{NFA}(G)$ be given. Then,
\[\exists q\in Q_\alpha:\hat{\Delta}(q,s)\cap F\neq\emptyset\mboxspace{$\Longleftrightarrow$}\exists g\in G: s\matches g\mbox{\,,}\]
for all $s\in\Sigma^\ast$.
\end{lem}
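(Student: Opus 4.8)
The plan is to prove a level-indexed strengthening of the statement by induction and then read off the lemma at the top level. For $0\le k\le\ell$, a state $(H,k)$, and a string $s$ with $\len{s}=\ell-k$, I would establish the invariant
\[
(H,k)\in Q \ \text{and}\ \hat{\Delta}\big((H,k),s\big)\cap F\neq\emptyset
\ \Longleftrightarrow\
H=\big\{\,g\in G \;\big\vert\; s\matches\substr{g}{k+1}{\ell}\,\big\}\neq\emptyset .
\]
At $k<\ell$ the characterising set is forced to be non-empty, which is exactly the condition for $(H,k)$ to be a legitimate state in $Q_k$; at $k=\ell$ it equals $G$, matching the unique accepting state. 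The lemma is then the case $k=0$: some start state $(H,0)\in Q_\alpha=Q_0$ drives $s$ into $F$ iff $H=\{g\in G\mid s\matches\substr{g}{1}{\ell}\}=\{g\in G\mid s\matches g\}$ is non-empty, i.e.\ iff $s\matches g$ for some $g\in G$. Before invoking the invariant I would dispose of the length mismatch: every transition raises the level by exactly one and $F$ sits at level $\ell$, so $(H,0)$ can reach $F$ only by reading a string of length $\ell$; since matching a $g\in G\subset\mathcal{G}^\ell$ also forces $\len{s}=\ell$, both sides of the lemma are false when $\len{s}\neq\ell$, and for $\len{s}=\ell$ the invariant applies.

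I would run the induction on the suffix length $m=\ell-k$, from $m=0$ upward. The base case $k=\ell$ is immediate: the only level-$\ell$ state is $(G,\ell)=F$, the only admissible string is $\varepsilon$, and $\{g\in G\mid\varepsilon\matches\substr{g}{\ell+1}{\ell}\}=\{g\in G\mid\varepsilon\matches\varepsilon\}=G$. For the step, write $s=\sigma s'$ and put $H'=\{g\in G\mid s'\matches\substr{g}{k+2}{\ell}\}$. The engine of the argument is the identity
\[
\big\{\,h\in H'\;\big\vert\;\sigma\in\chr{h}{k+1}\,\big\}
=\big\{\,g\in G\;\big\vert\;\sigma s'\matches\substr{g}{k+1}{\ell}\,\big\},
\]
which is exactly the assertion $\proc{Parent}\big((H',k+1),\sigma\big)=(H,k)$ when $H$ equals the level-$k$ set. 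For ``$\Longleftarrow$'' I assume $H$ equals that set; since $H\subseteq H'$, the set $H'$ is non-empty, so the induction hypothesis gives $(H',k+1)\in Q$ with $\hat{\Delta}\big((H',k+1),s'\big)\cap F\neq\emptyset$. The identity then yields $\proc{Parent}((H',k+1),\sigma)=(H,k)$, which by Equation~\eqref{eqn:gen_string_Qi} places $(H,k)$ in $Q_k$ and by Equation~\eqref{eqn:gen_string_delta} puts $(H',k+1)\in\Delta((H,k),\sigma)$; hence the accepting run for $s'$ extends to one for $s$. For ``$\Longrightarrow$'' I read an accepting run backwards: its first step enters some $(H',k+1)\in\Delta((H,k),\sigma)$ from which $s'$ is accepted, so $\proc{Parent}((H',k+1),\sigma)=(H,k)$ and, by the induction hypothesis, $H'$ is the level-$(k+1)$ set; the identity then forces $H$ to be the level-$k$ set.

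The step I expect to be the real obstacle is the ``$\Longleftarrow$'' membership claim — that the intermediate states a candidate run must traverse genuinely lie in the constructed space $Q$, and not merely in $\bar{Q}$. This is precisely what the top-down recursive definition of the levels $Q_i$ through the $\proc{Parent}$ relation is engineered to deliver, and the induction is ordered (from level $\ell$ downwards) so that the state one level below is already certified to be in $Q$ before it is used to witness $(H,k)\in Q_k$ via Equation~\eqref{eqn:gen_string_Qi}. Two bookkeeping points keep this clean: first, the inclusion $H\subseteq H'$ makes non-emptiness propagate downward automatically, so the non-emptiness required for $(H,k)\in Q_k$ comes for free; second, the suffix indices must be tracked carefully so that the displayed identity is read off literally as an instance of the definition of $\proc{Parent}$ in Equation~\eqref{eqn:gen_string_parent}. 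Everything else — the length argument and the set manipulations — is routine.
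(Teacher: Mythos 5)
Your proof is correct, and it takes a genuinely different route from the paper's. The paper proves the two directions by separate, asymmetric arguments on runs: for ``$\Longrightarrow$'' it follows an arbitrary accepting run $q_0,\ldots,q_{\len{s}}$, observes via \eqref{eqn:gen_string_parent} that the sets $H_i$ are nested, and inductively concludes that $s$ matches every $h\in H_0$; for ``$\Longleftarrow$'' it explicitly builds the $\proc{Parent}$-chain downward from $(G,\ell)$, tracks that the single witness $g$ stays in every $H'_i$, and uses the resulting non-emptiness to certify via \eqref{eqn:gen_string_Qell} and \eqref{eqn:gen_string_Qi} that each chain member lies in $Q$, so the chain is a genuine accepting run. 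You instead prove one strengthened biconditional by induction on suffix length: $(H,k)$ is a state of $Q$ from which $s$ reaches $F$ if and only if $H$ is \emph{exactly} the set of $g\in G$ whose suffix $\substr{g}{k+1}{\ell}$ is matched by $s$ (and this set is non-empty). The engine is the same in both proofs --- one application of $\proc{Parent}$ corresponds to matching one more character --- and indeed the paper's chain states are precisely your characterizing sets; the difference is that the paper tracks only membership of one witness, while you track exact equality. What your stronger invariant buys: its ``$\Longrightarrow$'' half shows that at most one state per level can accept a given string, which is essentially the level-wise language disjointness that the paper proves separately, by another induction, in Lemma~\ref{lem:gen_string_set_nfa_simple}; your route would therefore deliver simplicity nearly for free. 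What it costs is a little extra bookkeeping, plus one degenerate blemish: for $G=\emptyset$ your base case fails (the state $(\emptyset,\ell)$ does accept $\varepsilon$, yet the characterizing set is empty), although the lemma itself holds vacuously there and the paper shares the implicit assumption $G\neq\emptyset$ (its Lemma~\ref{lem:gen_string_set_nfa_simple} also fails for $G=\emptyset$, where $(\emptyset,\ell)$ is inaccessible). Your explicit dispatch of the length mismatch $\len{s}\neq\ell$ is a point of care the paper leaves implicit.
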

\begin{proof}
We start with the forward direction ``$\Longrightarrow$''. If $s\in\Sigma^\ast$ is accepted by $\proc{NFA}(G)$, then there exists a sequence of states $q_0,\ldots,q_{\len{s}}$ such that $q_0\in Q_\alpha$, $q_{\len{s}}\in F$, and $q_i\in\Delta(q_{i-1},\chr{s}{i})$ for $0<i\leq\len{s}$. It follows from Equation~\eqref{eqn:gen_string_delta} that $q_{i-1}=\proc{Parent}(q_i,\chr{s}{i})$. Hence, Equation~\eqref{eqn:gen_string_parent} implies that $H_0\subset\ldots\subset H_{\len{s}}$, where $(H_i,k_i):=q_i$. Furthermore, by Equation~\eqref{eqn:gen_string_Qi}, $H_0$ is non-empty. Inductively applying~\eqref{eqn:gen_string_parent} now yields that~$s\matches h$ for all $h\in H_0$, which proves the forward direction.

Let us prove the backward direction ``$\Longleftarrow$''. Let $g\in G$, such that $s\matches g$. Consider the sequence of states $q'_0,\ldots ,q'_\len{s}$ with $(H'_i,k'_i):=q'_i$ given by $q'_{\len{s}}:=(G,\ell)$ and $q'_{i-1}:=\proc{Parent}(q'_i,\chr{s}{i})$ for $0<i\leq\len{s}$. From $s\matches g$ and Equation~\eqref{eqn:gen_string_parent} it follows that $g\in H'_i$ for $0\leq i\leq\len{s}$. Thus, each $H'_i$ is non-empty and by Equations~\eqref{eqn:gen_string_Qell} and~\eqref{eqn:gen_string_Qi} we get $q'_i\in Q_i$ for $0\leq i\leq\len{s}$, implying that $q'_0\in Q_0=Q_\alpha$ is a start state. From Equation~\eqref{eqn:gen_string_delta} we conclude that $\hat{\Delta}(q'_0,s)=q'_{\len{s}}$ which proves the claim as $q'_{\len{s}}\in Q_\ell=F$.
\end{proof}

In analogy to Lemma~\ref{lem:single_gen_string_nfa_simple}, we verify that $\proc{NFA}(G)$ is indeed a simple NFA.

\begin{lem}\label{lem:gen_string_set_nfa_simple}
Let $\ell\in\N$ and $G\subset\mathcal{G}^\ell$. Then, $M_G:=\proc{NFA}(G)$ is a simple NFA.
\end{lem}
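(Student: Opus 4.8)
The plan is to verify directly the three defining properties of a simple NFA from Definition~\ref{def:simple_nfa}: that every state is accessible, that every state language $\mathcal{L}(q)$ is non-empty (coaccessibility), and that the languages are pairwise disjoint. The backbone for the whole argument is the following characterization of the language of an arbitrary state $(H,k)\in Q$, which I would prove first:
\[
\mathcal{L}\big((H,k)\big)=\big\{\,t\in\Sigma^{\ell-k}\ \big\vert\ H=\{h\in G\mid t\matches\suffix{h}{k+1}\}\,\big\}\,.
\]
In particular, every string accepted from $(H,k)$ has length exactly $\ell-k$ and pins down $H$ uniquely, which makes disjointness almost immediate.

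I expect accessibility and coaccessibility to be straightforward chains of transitions obtained from the \proc{Parent} relation. For coaccessibility, observe that by the definition of $Q_i$ in Equation~\eqref{eqn:gen_string_Qi} every state $(H,k)$ with $k<\ell$ equals $\proc{Parent}(q,\sigma)$ for some $q\in Q_{k+1}$ and $\sigma\in\Sigma$, so that $q\in\Delta((H,k),\sigma)$ by Equation~\eqref{eqn:gen_string_delta}; chaining such transitions from level $k$ through to level $\ell$ reaches the unique accepting state $(G,\ell)$, whence $\mathcal{L}((H,k))\neq\emptyset$. For accessibility, given $(H,k)$ with $k\geq 1$, I would pick any $h_0\in H$ and any $\sigma\in\chr{h_0}{k}$ (both non-empty); then $H^-:=\{h\in H\mid\sigma\in\chr{h}{k}\}$ contains $h_0$, hence is non-empty, and $(H^-,k-1)=\proc{Parent}((H,k),\sigma)$ lies in $Q_{k-1}$ by Equation~\eqref{eqn:gen_string_Qi} while satisfying $(H,k)\in\Delta((H^-,k-1),\sigma)$. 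Descending to level $0$ in this fashion produces a path from a start state in $Q_0=Q_\alpha$ to $(H,k)$.

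The main obstacle is the language characterization itself. If $t\in\mathcal{L}((H,k))$, there is an accepting run $r_0=(H,k),\ldots,r_{\len{t}}=(G,\ell)$ with $r_{j-1}=\proc{Parent}(r_j,\chr{t}{j})$, as forced by Equation~\eqref{eqn:gen_string_delta}. Reading this run backward from the \emph{unique} accepting state $(G,\ell)$ and unfolding Equation~\eqref{eqn:gen_string_parent} one level at a time---exactly the computation in the backward direction of the preceding lemma, but now started at an arbitrary level $k$ rather than at $0$---shows that the first component of $r_0$ is forced to equal $\{h\in G\mid t\matches\suffix{h}{k+1}\}$, and hence this set equals $H$. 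The point requiring care is that, although the forward transitions out of $(H,k)$ may be genuinely non-deterministic, each backward \proc{Parent} step is single-valued, so the run terminating at $(G,\ell)$, and with it the set $H$, is determined by $t$ alone. Granting the characterization, disjointness is immediate: states on different levels accept strings of different lengths $\ell-k$, while for two distinct states $(H,k)\neq(H',k)$ on the same level a common $t$ would force $H=\{h\in G\mid t\matches\suffix{h}{k+1}\}=H'$, a contradiction. Together with accessibility and non-emptiness, this establishes that $M_G$ is simple.
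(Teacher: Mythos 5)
Your proof is correct, but it takes a genuinely different route from the paper's. The paper proves pairwise disjointness by induction on the level index, working upward from the bottom level $Q_\ell$ (where disjointness is trivial since $|Q_\ell|=1$): given $s\in\mathcal{L}(p)\cap\mathcal{L}(q)$ for distinct $p,q\in Q_{i-1}$, the induction hypothesis forces the state $r\in Q_i$ with $\hat{\Delta}(r,\suffix{s}{2})\in F$ to be unique, and then the single-valuedness of $\proc{Parent}$ gives $p=\proc{Parent}(r,\chr{s}{1})=q$, a contradiction. You instead establish the closed-form characterization
\[
\mathcal{L}\big((H,k)\big)=\big\lbrace t\in\Sigma^{\ell-k}\,\big\vert\,H=\lbrace h\in G\mid t\matches\suffix{h}{k+1}\rbrace\big\rbrace\,,
\]
from which disjointness is immediate, both across levels (length) and within a level (the accepted string determines $H$). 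Both arguments ultimately rest on the same fact---backward determinism of the run, i.e.\ $\proc{Parent}$ being a function---but they package it differently. Notably, your strategy is exactly the one the paper itself uses in Section~\ref{sec:app_to_consensus_strings} for consensus strings, where Lemma~\ref{lem:nfa_consensus_state_property} characterizes the state languages and simpleness then ``follows immediately''; you have in effect supplied the missing analogue of that lemma for sets of generalized strings. Your approach buys more than the paper's: the characterization subsumes the correctness statement of the preceding lemma, and you spell out accessibility and coaccessibility (via explicit $\proc{Parent}$-chains in both directions), which the paper dismisses in one sentence. The price is a somewhat longer argument; the paper's level-wise induction avoids stating any language formula and is leaner if disjointness is all one needs.
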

\begin{proof}
The level-wise construction directly implies that all states are accessible and coaccessible, i.e.\ $\mathcal{L}(q)$ is non-empty for all~$q\in Q$. States with empty~$\mathcal{L}(q)$ cannot be generated by Equation~\eqref{eqn:gen_string_Qi}.

It remains to be shown that for all distinct~$p,q\in Q$ the sets~$\mathcal{L}(p)$ and~$\mathcal{L}(q)$ are disjoint. By construction, this is clearly true if~$p$ and~$q$ are in different levels. Hence, it suffices to show that
\begin{equation}\label{eqn:levelwise_disjointness}
\mathcal{L}(p)\cap\mathcal{L}(q)=\emptyset\mbox{ for all }p,q\in Q_i\mbox{ with }p\neq q
\end{equation}
for all $Q_i$ with $0\leq i\leq\ell$.
We prove this by induction on $i$. First, note that for $i=\ell$, Condition~\eqref{eqn:levelwise_disjointness} is fulfilled as $|Q_\ell|=1$. Assume that~\eqref{eqn:levelwise_disjointness} holds for $i>0$. For the sake of contradiction, we further assume there exist distinct $p,q\in Q_{i-1}$, such that $\mathcal{L}(p)\cap\mathcal{L}(q)\neq\emptyset$. Let $s\in\mathcal{L}(p)\cap\mathcal{L}(q)$; it follows that $\hat{\Delta}(p,s)\in F$. There must exist a state~$r\in Q_i$ such that $\hat{\Delta}(r,\suffix{s}{2})\in F$. As, by our induction hypothesis, Condition~\eqref{eqn:levelwise_disjointness} holds for~$i$, we conclude that the state~$r$ is unique. It follows from~\eqref{eqn:gen_string_delta} that $r\in\Delta(p,\chr{s}{1})$ and $r\in\Delta(q,\chr{s}{1})$. Applying the definition of~$\Delta$, we get $p=\proc{Parent}(r,\chr{s}{1})=q$ and, thus, $p=q$.
\end{proof}

In Section~\ref{sec:single_gen_string}, we added an initial self-transition to the constructed NFA in order to accept not only the given generalized string, but all strings whose suffix matches the generalized string. We thereby obtained an automaton that finds all occurrences of the generalized string in a given text. Now we repeat this step by transforming $\proc{NFA}(G)$ using Equation~\eqref{eqn:self-transition}. Again, we refer to the resulting modified automaton by $\proc{NFA}_\circlearrowleft(G)$. Note that for $|G|=1$ we obtain the same automaton as constructed in Section~\ref{sec:single_gen_string}. Combining Theorem~\ref{thm:simple_nfa}, Lemma~\ref{lem:gen_string_set_nfa_simple}, and Lemma~\ref{lem:self-transition} yields the following corollary:

\begin{cor}
Let $\ell\in\N$, $G\subset\mathcal{G}^\ell$, and $M_G:=\proc{NFA}_\circlearrowleft(G)$.

Then, the result of $\proc{SubsetConstruction}(M_G)$ is a minimal DFA.
\end{cor}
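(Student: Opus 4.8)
The plan is to assemble the three cited results in sequence. First I would apply Lemma~\ref{lem:gen_string_set_nfa_simple}, which guarantees that $\proc{NFA}(G)$ is a simple NFA. The self-transition modification of Equation~\eqref{eqn:self-transition} turns $\proc{NFA}(G)$ into $M_G=\proc{NFA}_\circlearrowleft(G)$, so to reach the conclusion via Theorem~\ref{thm:simple_nfa} it suffices to show that $M_G$ is itself simple. By Lemma~\ref{lem:self-transition}, this reduces to verifying a single structural condition on $\proc{NFA}(G)$: that no start state can be reached from any other state, i.e.\ that there are no $\sigma\in\Sigma$, $q_\alpha\in Q_\alpha$, and $q\in Q$ with $q\neq q_\alpha$ and $q_\alpha\in\Delta(q,\sigma)$.

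The heart of the argument is this reachability condition, and it follows directly from the level structure of the construction. The start states are exactly the top level, $Q_\alpha=Q_0$, and by Equation~\eqref{eqn:gen_string_delta} every transition out of a state $(H,k)$ lands in level $Q_{k+1}$. Thus every transition strictly increases the level index, and in particular no transition ever enters level~$0$. Consequently no start state $(H,0)\in Q_0$ appears in the image $\Delta(q,\sigma)$ of any state~$q$, which establishes the hypothesis of the ``$\Longleftarrow$'' direction of Lemma~\ref{lem:self-transition}.

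With that hypothesis verified, Lemma~\ref{lem:self-transition} yields that $M_G=\proc{NFA}_\circlearrowleft(G)$ is simple, and Theorem~\ref{thm:simple_nfa} then immediately gives that $\proc{SubsetConstruction}(M_G)$ is minimal. I do not expect a genuine obstacle here: the only content beyond citing earlier results is the observation that the downward, level-increasing nature of~$\Delta$ forbids any transition into the top level~$Q_0$. The one point that warrants care is confirming that this holds for \emph{all} states, including the accepting state $(G,\ell)\in Q_\ell$, whose outgoing transitions are empty by the second case of Equation~\eqref{eqn:gen_string_delta} and therefore vacuously satisfy the condition.
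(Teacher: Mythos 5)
Your proof is correct and takes essentially the same route as the paper, which obtains the corollary by combining Theorem~\ref{thm:simple_nfa}, Lemma~\ref{lem:gen_string_set_nfa_simple}, and Lemma~\ref{lem:self-transition}. The paper leaves the hypothesis of Lemma~\ref{lem:self-transition} (no start state reachable from another state) implicit, whereas you verify it explicitly via the level-increasing nature of the transition function in Equation~\eqref{eqn:gen_string_delta}; that is a useful elaboration, not a different approach.
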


\subsubsection{Algorithm and Runtime}
The construction scheme formalized in Equations~\eqref{eqn:gen_string_Qell} and~\eqref{eqn:gen_string_Qi} can directly be translated into an algorithm:
\begin{enumerate}
\item Initialize transition map $\Delta$ to be empty.
\item Initialize the bottom level $Q_\ell$ to contain its only state $(G,\ell)$.
\item\label{step:loop_level} For $k$ from $\ell-1$ down to $0$, build level~$Q_k$:
  \begin{enumerate}
    \item Initialize level~$Q_k$ to be empty.
    \item\label{step:loop_node_alphabet} For each node $(H',k+1)\in Q_{k+1}$ and each $\sigma\in\Sigma$
       \begin{enumerate}
         \item\label{step:compute_H} Compute the set $H:=\big\lbrace h\in H'\,\big\vert\,\sigma\in\chr{h}{k+1}\big\rbrace$.
         \item If $H\neq\emptyset$ and $(H,k)\notin Q_k$, add $(H,k)$ to $Q_k$.
         \item Add transition $\big((H,k),\sigma\big)\mapsto (H',k+1)$ to $\Delta$.
       \end{enumerate}
  \end{enumerate}
\item Add self-transitions to all $q\in Q_0$.
\end{enumerate}

In Loop~\ref{step:loop_level}, we build $\ell$ levels. Each level contains at most $2^{|G|}$ states and thus the body of Loop~\ref{step:loop_node_alphabet} is executed $\OO(2^{|G|}\cdot|\Sigma|)$ times for each level, where Step~\ref{step:compute_H} takes~$\OO(|G|)$ time and the other steps can be performed in constant time. All in all, the algorithm takes~$\OO(2^{|G|}\cdot \ell\cdot|\Sigma|\cdot|G|)$ time.

The construction of a minimal DFA from a set of generalized strings thus takes~$\OO(2^{|G|}\cdot \ell\cdot|\Sigma|\cdot|G|+m)$ time, where~$m$ is the number of states in the minimal DFA. 

\section{Application to Consensus Strings with a Hamming Neighborhood}\label{sec:app_to_consensus_strings}
Another type of motif commonly used in computational biology is a consensus string along with a distance threshold. Here, we assume that a (generalized) string~$s$ and a distance threshold~$d_{\max}$ are given and want to compute the minimal DFA that recognizes all strings with a Hamming distance to~$s$ of at most~$d_{\max}$, where the Hamming distance between a string~$s$ and a generalized string~$g$ of same length is defined as
\[
d(s,g):=\Big\vert\Big\lbrace i\in\lbrace 1,\ldots,\len{s}\rbrace\,\big\vert\,\chr{s}{i}\notin\chr{g}{i}\Big\rbrace\Big\vert\mbox{\,.}
\]
\begin{figure}[t!]
\begin{center}
\includegraphics[width=\textwidth]{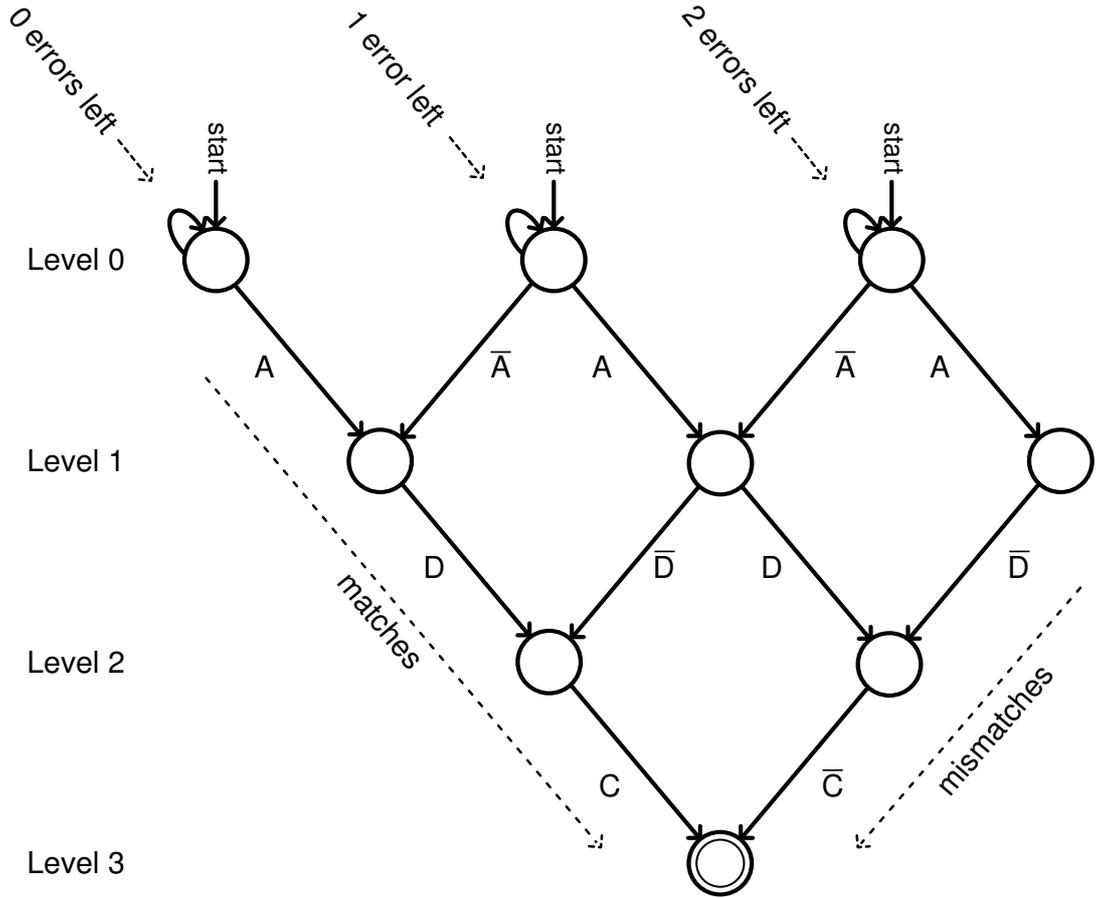}
\end{center}
\caption{Example of a simple NFA over the alphabet $\Sigma=\lbrace\texttt{A,B,C,D}\rbrace$ recognizing the consensus \texttt{ADC} and all strings within a Hamming distance of two or less. Characters with bars stand for the inverse, e.g. $\overline{\texttt{A}}$ stands for \texttt{B}, \texttt{C}, or \texttt{D}. The accepting state is represented by two concentric circles.
}
\label{fig:nfa_consensus}
\end{figure}
In this section, we construct a simple NFA recognizing a generalized string and its Hamming neighborhood. The construction is similar to the one given in~\cite{Navarro2002Flexible}. Interestingly, the resulting NFA turns out to be simple.

The basic idea for the construction is to use a two-dimensional grid as a state space, where we advance into one dimension whenever a valid character has been read and into the other dimension for each mismatch. Figure~\ref{fig:nfa_consensus} illustrates an NFA built in this way. Formally the state space is defined by
\begin{equation}\label{eqn:nfa_consensus_Q}
Q:=\Big\lbrace(e,k)\in\lbrace 0,\ldots,d_{\max}\rbrace\times\lbrace 0,\ldots,\len{g}\rbrace\,\Big\vert\,\len{g}-k-e\geq 0\Big\rbrace
\end{equation}
with the following semantics: state $(e,k)$ accepts all strings of length $\len{g}-k$ that match the respective suffix of~$g$ with exactly~$e$ errors. The condition $\len{g}-k-e\geq 0$ states that the number of errors~$e$ cannot be larger than $\len{g}-k$, which is the number of characters left. We define the transition function to obey this semantics:
\begin{equation}\label{eqn:nfa_consensus_Delta}
\Delta:(e,k)\times\sigma\mapsto
\begin{cases}
z(e,k+1) & \mbox{if $\sigma\in\chr{g}{k+1}$\,,}\\
z(e-1,k+1) & \mbox{otherwise\,,}
\end{cases}
\end{equation}
where the function $z:\Z\times\Z\rightarrow\powset{Q}$ returns the empty set whenever we ``fall off the grid''. More precisely,
\begin{equation}\label{eqn:nfa_consensus_z}
z:(e,k)\mapsto
\begin{cases}
\big\lbrace(e,k)\big\rbrace & \mbox{if $(e,k)\in Q$\,,} \\
\emptyset & \mbox{otherwise\,.}
\end{cases}
\end{equation}
As before, the topmost level constitutes the start states, i.e.\ $Q_\alpha:=\big\lbrace(e,k)\in Q\,\vert\,k=0\big\rbrace$, and the bottommost level contains only the single accepting state, i.e.\ $F:=\lbrace(0,\len{g})\rbrace$. We write $\proc{NFA}(g,d_{\max}):=(Q, \Sigma, \Delta, Q_\alpha, F)$ to denote the NFA constructed in this way. Again, we use the notation $\proc{NFA}_\circlearrowleft(g,d_{\max}):=(Q, \Sigma, \Delta_\circlearrowleft, Q_\alpha, F)$ to refer to the automaton with self-transitions added to the start states. Note that for $d_{\max}=0$, the resulting automaton is isomorphic to the one constructed from a single generalized string in Section~\ref{sec:single_gen_string}.

In order to prove that the construction is correct and produces simple NFAs, we use the following Lemma on the state's languages.

\begin{lem}\label{lem:nfa_consensus_state_property}
Let $g\in \mathcal{G}_\Sigma^\ast$, $d_{\max}\in\N_0$ and $M=\proc{NFA}(g,d_{\max})=(Q, \Sigma, \Delta, Q_\alpha, F)$. Then, the language of state $(e,k)$ is characterized by
\[
\mathcal{L}\big((e,k)\big) = \Big\lbrace s\in\Sigma^{\len{g}-k}\,\Big\vert\,d\big(s,\suffix{g}{k+1}\big)=e\Big\rbrace\mbox{\,,}
\]
for all $(e,k)\in Q$.
\end{lem}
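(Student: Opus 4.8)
The plan is to prove the stated characterization by induction on the suffix length $\len{g}-k$, equivalently by downward induction on the level $k$ from $\len{g}$ to $0$. The key structural observation is that, by Equations~\eqref{eqn:nfa_consensus_Delta} and~\eqref{eqn:nfa_consensus_z}, the image $\Delta\big((e,k),\sigma\big)$ is always either a singleton or empty; the automaton therefore behaves like a partial DFA, and a reading of $s$ from $(e,k)$ that reaches $F$ must descend exactly one level per character. In particular only strings of length exactly $\len{g}-k$ can lie in $\mathcal{L}\big((e,k)\big)$, which already accounts for the length constraint on the right-hand side.

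For the base case $k=\len{g}$, the grid condition $\len{g}-k-e\geq 0$ forces $e=0$, so the only state is the accepting state $(0,\len{g})$. All its outgoing transitions fall off the grid and hence return $\emptyset$ via $z$, so $\mathcal{L}\big((0,\len{g})\big)=\lbrace\varepsilon\rbrace$, which agrees with $\big\lbrace s\in\Sigma^0 \,\big\vert\, d\big(s,\suffix{g}{\len{g}+1}\big)=0\big\rbrace=\lbrace\varepsilon\rbrace$.

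For the inductive step, fix $(e,k)\in Q$ with $k<\len{g}$ and assume the characterization holds for every state at level $k+1$. Writing a candidate string as $s=\sigma\,s'$ with $\sigma=\chr{s}{1}$ and $s'=\suffix{s}{2}$, I would split into the match case ($\sigma\in\chr{g}{k+1}$) and the mismatch case ($\sigma\notin\chr{g}{k+1}$). In the match case the unique successor is $z(e,k+1)$, and the induction hypothesis gives $s\in\mathcal{L}\big((e,k)\big)$ iff $d\big(s',\suffix{g}{k+2}\big)=e$; since a matching first character contributes no error, this is exactly $d\big(s,\suffix{g}{k+1}\big)=e$. In the mismatch case the successor is $z(e-1,k+1)$, and the induction hypothesis gives $d\big(s',\suffix{g}{k+2}\big)=e-1$; adding the one error from the mismatched first character again yields $d\big(s,\suffix{g}{k+1}\big)=e$. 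The elementary decomposition of the Hamming distance $d\big(s,\suffix{g}{k+1}\big)$ into the contribution of the first position plus $d\big(s',\suffix{g}{k+2}\big)$ is what makes both cases close.

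The main point requiring care is the treatment of the boundary cases where $z$ returns $\emptyset$, so that the intended successor does not exist in $Q$. A match is attempted into the empty set precisely when $(e,k+1)\notin Q$, i.e.\ when $e=\len{g}-k$; a mismatch is attempted into the empty set precisely when $e=0$ (the alternative $\len{g}-k-e<0$ being excluded by $(e,k)\in Q$). In the first situation no string beginning with a matching character can reach $F$, which is consistent with the right-hand side because achieving $d=\len{g}-k$ over $\len{g}-k$ positions forces every position to be a mismatch. In the second situation no string beginning with a mismatch can reach $F$, consistent with $d=0$ forbidding any mismatch. Checking that the right-hand side is correspondingly empty in exactly these two cases is the only subtlety; once it is verified, the induction goes through and establishes the claimed equality of languages.
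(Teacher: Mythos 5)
Your proof is correct, but it takes a genuinely different route from the paper's. You argue by downward induction on the level $k$, exploiting the fact that each $\Delta\big((e,k),\sigma\big)$ is a singleton or empty (a partial DFA), decomposing $s=\sigma s'$ and using the additivity of the Hamming distance over the first position, with an explicit check of the two boundary situations in which $z$ returns $\emptyset$ (a match attempted when $e=\len{g}-k$, a mismatch attempted when $e=0$) and a verification that the right-hand side is empty exactly there. The paper instead reasons globally about runs and proves the two inclusions separately: for ``$\subseteq$'' it observes from Equation~\eqref{eqn:nfa_consensus_Delta} that the error coordinate decreases by exactly one precisely at mismatches, so an accepting run from $(e,k)$ to $(0,\len{g})$ must read exactly $e$ mismatches; for ``$\supseteq$'' it constructs the intended run and disposes of the grid-boundary issue in one stroke via the invariant that $\len{g}-k-e$ never increases along transitions and equals zero at the end, hence was never negative, so by Equation~\eqref{eqn:nfa_consensus_Q} every intermediate state exists. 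Your induction buys a fully local, step-by-step checkable argument in which the delicate point -- when the intended successor does not exist in $Q$ -- is isolated and matched against the combinatorics of the distance; the paper's proof buys brevity, with the monotonicity invariant replacing your two-case boundary discussion. Both establish the same equality, and your identification of exactly which $(e,k,\sigma)$ make $z$ return $\emptyset$ is the right substitute for the paper's invariant.
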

\begin{proof}
We start with the direction ``$\subseteq$''. By construction of~$\Delta$ and~$F$, we have $\mathcal{L}\big((e,k)\big)\subseteq\Sigma^{\len{g}-k}$. Let $s\in\mathcal{L}\big((e,k)\big)$, then $\hat{\Delta}\big((e,k),s\big)=(0,\len{g})$. That means, in the course of $\len{s}$ state transitions the first component of the state changes from~$e$ to~0. As we see from Equation~\eqref{eqn:nfa_consensus_Delta}, the only change possible in the first component is a decrease by~1, which happens if and only if the read character is a mismatch. Thus, it follows that $d\big(s,\suffix{g}{k+1}\big)=e$.

Now we prove the backward direction ``$\supseteq$''. Let $s\in\Sigma^{\len{g}-k}$ and $d\big(s,\suffix{g}{k+1}\big)=e$. That means there are exactly $e$ indices $a_1,\ldots,a_e$ such that $\chr{s}{a_i}\notin\chr{g}{k+a_i}$ for $1\leq i\leq e$. Provided that all states exist and thus the~$z$ function never returns~$\emptyset$, we apply the first case of~\eqref{eqn:nfa_consensus_Delta} exactly~$\len{s}-e$ times and the second case exactly $e$ times, ending in state $(0,\len{g})$ as claimed. The only thing left to verify is that~$z$ indeed never returns~$\emptyset$. Note that, by~\eqref{eqn:nfa_consensus_Delta}, the term $\len{g}-k-e$ cannot increase. Since it reaches zero after~$\len{s}$ steps, it cannot have been smaller than zero at any time. Hence, by Equation~\eqref{eqn:nfa_consensus_Q}, all intermediate states exist and, thus, the first case of Equation~\eqref{eqn:nfa_consensus_z} is applied for all state transitions.
\end{proof}

Using this lemma, the construction's correctness is easily verified:
\begin{lem}
Let $g\in \mathcal{G}_\Sigma^\ast$, $d_{\max}\in\N_0$ and $M=\proc{NFA}(g,d_{\max})=(Q, \Sigma, \Delta, Q_\alpha, F)$. Then, $M$ accepts exactly the strings $\lbrace s\in\Sigma^\len{g}\,\vert\,d(s,g)\leq d_{\max}\rbrace$.
\end{lem}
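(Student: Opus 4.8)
The plan is to read the accepted language straight off Lemma~\ref{lem:nfa_consensus_state_property}, which already pins down the language of every individual state. Since $M$ has the set of start states $Q_\alpha$, a string $s$ is accepted precisely when $\hat{\Delta}(q,s)\cap F\neq\emptyset$ for some $q\in Q_\alpha$, i.e.\ when $s\in\mathcal{L}(Q_\alpha)=\bigcup_{q\in Q_\alpha}\mathcal{L}(q)$. It therefore suffices to enumerate the start states, substitute their languages from the lemma, and simplify the resulting union.

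First I would identify the start states explicitly. By definition $Q_\alpha=\lbrace(e,k)\in Q\,\vert\,k=0\rbrace$, and Equation~\eqref{eqn:nfa_consensus_Q} shows that $(e,0)\in Q$ exactly when $0\le e\le\min(d_{\max},\len{g})$. Next I would apply Lemma~\ref{lem:nfa_consensus_state_property} to each such state. For $k=0$ the relevant suffix is $\suffix{g}{1}=g$ itself, so the lemma gives
\[
\mathcal{L}\big((e,0)\big)=\big\lbrace s\in\Sigma^{\len{g}}\,\big\vert\,d(s,g)=e\big\rbrace\mbox{\,.}
\]
Taking the union over all start states then yields $\mathcal{L}(Q_\alpha)=\bigcup_{e=0}^{\min(d_{\max},\len{g})}\lbrace s\in\Sigma^{\len{g}}\,\vert\,d(s,g)=e\rbrace$.

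The final step is to collapse this union. For any $s\in\Sigma^{\len{g}}$ the quantity $d(s,g)$ is a fixed integer with $0\le d(s,g)\le\len{g}$, so the sets indexed by distinct values of $e$ are disjoint and partition $\Sigma^{\len{g}}$ according to distance-to-$g$. Hence the union collects precisely those $s$ with $d(s,g)\le\min(d_{\max},\len{g})$; and since $d(s,g)\le\len{g}$ holds automatically for every string of length $\len{g}$, the bound $\min(d_{\max},\len{g})$ may be replaced by $d_{\max}$ without changing the set. This gives $\mathcal{L}(Q_\alpha)=\lbrace s\in\Sigma^{\len{g}}\,\vert\,d(s,g)\le d_{\max}\rbrace$, as claimed. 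I do not expect a genuine obstacle here: the substantive work—tracking how the error count decreases by exactly one on each mismatch along an accepting run—was already discharged in Lemma~\ref{lem:nfa_consensus_state_property}. The only points that demand a little care are observing that $\suffix{g}{1}=g$, so that the start-state languages are measured against all of $g$, and handling the corner case $d_{\max}\ge\len{g}$ in the last simplification.
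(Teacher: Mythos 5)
Your proof is correct and follows essentially the same route as the paper's: both express the accepted language as $\mathcal{L}(Q_\alpha)=\bigcup_{e=0}^{\min(d_{\max},\len{g})}\mathcal{L}\big((e,0)\big)$ and substitute the state languages from Lemma~\ref{lem:nfa_consensus_state_property}. If anything, yours is slightly more complete, since the paper leaves the final collapse of the union to $\lbrace s\in\Sigma^{\len{g}}\,\vert\,d(s,g)\leq d_{\max}\rbrace$ (including the replacement of $\min(d_{\max},\len{g})$ by $d_{\max}$) implicit, whereas you spell it out.
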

\begin{proof}
By definition, $M$ accepts the strings $\L(Q_\alpha)$. By construction of~$Q_\alpha$ and Lemma~\ref{lem:nfa_consensus_state_property}, we obtain
\[
\L(Q_\alpha)=\bigcup_{e=0}^{\min(d_{\max},\len{g})} \L\big((e,0)\big) = \bigcup_{e=0}^{\min(d_{\max},\len{g})}\big\lbrace s\in\Sigma^\len{g}\,\big\vert\,d(s,g)=e\big\rbrace
\]
\end{proof}

\begin{lem}
Let $g\in \mathcal{G}_\Sigma^\ast$, $d_{\max}\in\N_0$. Then, $\proc{NFA}(g,d_{\max})=(Q, \Sigma, \Delta, Q_\alpha, F)$, is a simple NFA.
\end{lem}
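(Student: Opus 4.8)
The plan is to verify, for $\proc{NFA}(g,d_{\max})$, the three defining properties of a simple NFA (Definition~\ref{def:simple_nfa}): that every state is accessible, that $\mathcal{L}(q)$ is non-empty for every state $q$, and that the languages $\mathcal{L}(q)$ are pairwise disjoint. The explicit description of each state's language given by Lemma~\ref{lem:nfa_consensus_state_property} should do most of the work, reducing everything to elementary statements about string lengths and Hamming distances, in direct analogy to the proof of Lemma~\ref{lem:single_gen_string_nfa_simple}.

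For pairwise disjointness, I would take two distinct states $(e,k)$ and $(e',k')$ and split into two cases. If $k\neq k'$, then by Lemma~\ref{lem:nfa_consensus_state_property} the two languages consist of strings of lengths $\len{g}-k$ and $\len{g}-k'$ respectively, which differ, so the languages are disjoint. If $k=k'$ but $e\neq e'$, then both languages lie in $\Sigma^{\len{g}-k}$, but every member of $\mathcal{L}\big((e,k)\big)$ has Hamming distance exactly $e$ from $\suffix{g}{k+1}$ while every member of $\mathcal{L}\big((e',k)\big)$ has distance exactly $e'$; since a fixed string has a well-defined distance to the fixed suffix $\suffix{g}{k+1}$, the two sets cannot share an element. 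This is the step that replaces the purely length-based argument of the single-string case.

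For accessibility I would show that $(e,k)$ is reached from the start state $(e,0)$ by reading any $k$ characters matching the first $k$ positions of $g$. Such characters exist because each $\chr{g}{i}$ is non-empty (the generalized alphabet excludes $\emptyset$), and by the first case of Equation~\eqref{eqn:nfa_consensus_Delta} a matching character sends $(e,j)$ to $(e,j+1)$. One checks that $(e,0)$ is a start state and that every intermediate state $(e,j)$ with $0\le j\le k$ lies in $Q$: from $(e,k)\in Q$ we have $\len{g}-k-e\ge 0$, hence $\len{g}-j-e\ge 0$ for all $j\le k$, so the $z$-function of Equation~\eqref{eqn:nfa_consensus_z} never returns $\emptyset$ and the walk stays on the grid.

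The main obstacle is non-emptiness of the languages (equivalently, coaccessibility). By Lemma~\ref{lem:nfa_consensus_state_property}, $\mathcal{L}\big((e,k)\big)$ is non-empty if and only if there is a string of length $\len{g}-k$ at Hamming distance exactly $e$ from $\suffix{g}{k+1}$. The grid condition $\len{g}-k-e\ge 0$ of Equation~\eqref{eqn:nfa_consensus_Q} guarantees at least $e$ available positions, and a witness is built by placing a mismatching character at $e$ of them and a matching character at the rest. The delicate point is that forcing a mismatch at position $i$ requires a character outside $\chr{g}{i}$ to exist, i.e.\ $\chr{g}{i}\neq\Sigma$; I would therefore need to argue that the $e$ mismatch positions can always be chosen among the $i$ with $\chr{g}{i}\neq\Sigma$, or to make explicit the standing assumption under which this holds. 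This is the only place where the combinatorial state-space definition interacts nontrivially with the actual alphabet, and it is where I expect to spend the most care.
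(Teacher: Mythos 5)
Your approach is the same as the paper's: the paper's entire proof reads ``By construction, all states are accessible and coaccessible. The disjointness of $\mathcal{L}(q)$ and $\mathcal{L}(q')$ for distinct $q,q'\in Q$ follows immediately from Lemma~\ref{lem:nfa_consensus_state_property}.'' Your disjointness argument (case $k\neq k'$ by length, case $k=k'$, $e\neq e'$ because a fixed string has a single well-defined distance to $\suffix{g}{k+1}$) and your accessibility argument (from the start state $(e,0)$ read $k$ matching characters, staying on the grid since $\len{g}-j-e\geq\len{g}-k-e\geq 0$) are correct, fully detailed versions of what the paper leaves implicit.

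The ``delicate point'' you flag for coaccessibility is not a defect of your write-up; it is a genuine gap in the paper's one-line proof, and in fact in the lemma as stated. If some position of $g$ equals the full alphabet (a wildcard, which is common in biological motifs), states can have empty languages. Concretely, take $\Sigma=\lbrace\texttt{A},\texttt{B}\rbrace$, let $g$ have length one with $\chr{g}{1}=\Sigma$, and set $d_{\max}=1$. Then $(1,0)\in Q$ by Equation~\eqref{eqn:nfa_consensus_Q}, but every $\sigma\in\Sigma$ matches $\chr{g}{1}$, so Equation~\eqref{eqn:nfa_consensus_Delta} gives $\Delta\big((1,0),\sigma\big)=z(1,1)=\emptyset$ because $(1,1)\notin Q$; hence $\mathcal{L}\big((1,0)\big)=\emptyset$, the state is not coaccessible, and the NFA is not simple. (Lemma~\ref{lem:nfa_consensus_state_property} itself remains true here --- it simply characterizes the language as an empty set --- so the paper's appeal to it for disjointness survives, but its claim of coaccessibility ``by construction'' does not.) To close the gap one needs exactly the hypothesis you identified: for every $(e,k)\in Q$ the suffix $\suffix{g}{k+1}$ must contain at least $e$ positions $i$ with $\chr{g}{i}\neq\Sigma$; a clean sufficient condition is that no position of $g$ equals $\Sigma$ (which presupposes $|\Sigma|\geq 2$). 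Under that assumption your witness construction --- mismatching characters at $e$ such positions, matching characters elsewhere --- completes the proof; without it, the lemma should instead be asserted for the trimmed automaton obtained by deleting states with empty language. Make the assumption explicit and your proof is complete, and more rigorous than the paper's.
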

\begin{proof}
By construction, all states are accessible and coaccessible. The disjointness of $\L(q)$ and $\L(q')$ for distinct~$q,q'\in Q$ follows immediately from Lemma~\ref{lem:nfa_consensus_state_property}.
\end{proof}

In analogy to Sections~\ref{sec:single_gen_string} and~\ref{sec:gen_string_set}, we can now add self-transitions to the start states to obtain $\proc{NFA}_\circlearrowleft(g,d_{\max})$. Note that, again, the conditions of Lemma~\ref{lem:self-transition} are satisfied, allowing us to apply Theorem~\ref{thm:simple_nfa}.

\begin{cor}
Let $g\in \mathcal{G}_\Sigma^\ast$, $d_{\max}\in\N_0$, and $M=\proc{NFA}_\circlearrowleft(g,d_{\max})$. Then, the result of $\proc{SubsetConstruction}(M)$ is a minimal DFA.
\end{cor}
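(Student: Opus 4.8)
The plan is to assemble this corollary from three ingredients already at our disposal: the preceding lemma establishing that $\proc{NFA}(g,d_{\max})$ is a simple NFA, Lemma~\ref{lem:self-transition} governing when adding self-transitions preserves simpleness, and Theorem~\ref{thm:simple_nfa} turning a simple NFA into a minimal DFA via the subset construction. Concretely, I would first recall that $M':=\proc{NFA}(g,d_{\max})$ is simple, then show that the hypothesis of Lemma~\ref{lem:self-transition} holds for $M'$, so that $M=\proc{NFA}_\circlearrowleft(g,d_{\max})=M'_\circlearrowleft$ is again simple, and finally apply Theorem~\ref{thm:simple_nfa} to conclude that $\proc{SubsetConstruction}(M)$ is minimal.

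The only step requiring verification is the hypothesis of Lemma~\ref{lem:self-transition}, namely that in $M'$ no start state can be reached from any other state. Here the level-wise structure of the state space does the work. The start states are precisely $Q_\alpha=\{(e,k)\in Q\mid k=0\}$, those residing in the topmost level. By Equation~\eqref{eqn:nfa_consensus_Delta}, every transition out of a state $(e,k)$ lands (via the auxiliary function $z$) in a state whose second coordinate is $k+1$; in other words, each transition strictly increases the $k$-coordinate by exactly one. Consequently, a state with $k=0$ can have no incoming transition at all, since such a transition would have to originate from a state with second coordinate $-1$, which is absent from $Q$ by Equation~\eqref{eqn:nfa_consensus_Q}. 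Thus no start state is reachable from any state, in particular from any state other than itself, and the condition of Lemma~\ref{lem:self-transition} is satisfied.

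With simpleness of $M$ secured, Theorem~\ref{thm:simple_nfa} immediately yields that $\proc{SubsetConstruction}(M)$ is a minimal DFA, completing the argument. I do not anticipate a genuine obstacle: the entire content is the monotonicity of the $k$-coordinate under $\Delta$, which is transparent from the construction. If anything, the only point deserving care is to phrase the reachability condition exactly as in Lemma~\ref{lem:self-transition}---no start state reachable from a \emph{distinct} state---rather than the stronger-sounding claim that start states have no predecessors at all; as noted, the latter in fact holds and trivially implies the former.
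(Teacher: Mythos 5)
Your proof is correct and follows exactly the paper's route: the paper likewise combines the simpleness of $\proc{NFA}(g,d_{\max})$ with Lemma~\ref{lem:self-transition} and Theorem~\ref{thm:simple_nfa}, merely remarking that the lemma's conditions are satisfied. Your explicit verification---that every transition increases the $k$-coordinate by one, so states with $k=0$ have no incoming transitions---is precisely the detail the paper leaves implicit, and it is accurate.
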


The state space of $\proc{NFA}_\circlearrowleft(g,d_{\max})$ has a size of~$\OO(\len{g}\cdot d_{\max})$. Deriving a construction algorithm that uses~$\OO(1)$ time per state is straightforward. We can, therefore, construct the minimal DFA from a generalized string~$g$ and the distance threshold~$d_{\max}$ in time~$\OO(\len{g}\cdot d_{\max} + m)$, where~$m$ is the size of the minimal DFA.

\section{Conclusions}
We introduced the concept of simple NFAs. These automata have a useful property: when subjected to the standard subset construction, they result in minimal DFAs. Motivated by a background in bioinformatics, we turned our attention to pattern classes found in this field. We gave an algorithm to construct a simple NFA from a set~$G$ of generalized strings of equal length~$\ell$ in time~$\OO(2^{|G|}\cdot \ell\cdot|\Sigma|\cdot|G|)$. Interestingly, this result suggests that the difficulty in dealing with sets of generalized strings stems from the size of the set rather than from the length of the strings. For motifs given in the form of a single (generalized) string~$g$ along with a Hamming neighborhood bounded by a distance threshold~$d_{\max}$, we presented an algorithm that constructs a simple NFA in time~$\OO(\len{g}\cdot d_{\max})$. A third important class of motifs are position weight matrices (PWMs) with a score threshold~\cite{Sta84}. Such a motif could be transformed into a set of generalized strings, which in turn could be handled by the presented algorithm. Nonetheless, a more direct method to construct a simple NFA from a PWM is desirable and should be subject of future research.

In this article, we demonstrated that, for the considered pattern classes, a minimal DFA can be constructed directly, that is, without the intermediate step of a non-minimal DFA. A question we did not address, regards the size of the constructed minimal automata. In practice, we might still be faced with an exponential blow-up in the number of states. Thus, on the practical side, this study should be complemented by experiments measuring automata sizes and runtimes for typical motifs in future work.

\section{Acknowledgments}
I wish to thank Sven Rahmann for giving valuable comments on an earlier version of this manuscript, Wim Martens for pointing me to the paper by van Glabbeek and Ploeger~\cite{GlaPlo08}, Marcel Martin and Chris Schwiegelshohn for proof-reading, and Hoi-Ming Wong, who assisted in finding a graph layout with few edge crossings for Figure~\ref{fig:nfa_three_genstrings}.

\bibliographystyle{abbrv}
\bibliography{lit}

\end{document}